\providecommand{\U}[1]{\protect\rule{.1in}{.1in}}
\newtheorem{theorem}{Theorem}
\newtheorem{lemma}[theorem]{Lemma}
\newtheorem{proposition}[theorem]{Proposition}
\newtheorem{corollary}[theorem]{Corollary}
\theoremstyle{definition}
\newtheorem{example}[theorem]{Example}
\begin{document}

\title{Intergenerational Preferences and Continuity: Reconciling Order and Topology}
\author{ A. Estevan{\footnotesize$^{1}$}, R.R. Maura{\footnotesize$^{2}$}, O. Valero{\footnotesize$^{3,4}$}}

\date{{\footnotesize $^{1}$Departamento EIM, Universidad P\'{u}blica de Navarra, Instituto INAMAT, Campus Arrosad\'{i}a, Pamplona, 31006, Spain. Corresponding Author. E-mail: asier.mugertza@navarra.es}\\
{\footnotesize $^{2}$Department of Economics, London School of Economics, WC2A 2AE London, United Kingdom.
E-mail: r.maura-rivero@lse.ac.uk}\\
{\footnotesize $^{3}$Departament de Ci\`encies; Matem\`atiques i Inform\`atica, Universitat de les Illes Balears, 07122 Palma de Mallorca (Illes Balears), Spain. E-mail: o.valero@uib.es}\\
{\footnotesize $^{4}$Institut d' Investigaci\'o Sanit\`aria Illes Balears (IdISBa), Hospital Universitari Son Espases, 07120 Palma de Mallorca (Illes Balears), Spain.}
}

\maketitle

\begin{abstract}
In this paper we focus our efforts on studying how a preorder and topology can be made compatible. Thus we provide a characterization of those that are continuous-compatible. Such a characterization states that such topologies must be finer than the so-called upper topology induced by the preorder and, thus, it  clarifies which topology is the smallest one among those that make the preorder continuous. Moreover, we provide sufficient conditions that allows us to discard in an easy way the continuity of a preference. In the light of the obtained results, we provide possibility counterparts of the a few celebrate impossibility theorems for continuous social social intergenerational preferences due to P. Diamond, L.G. Svensson and T. Sakai. Furthermore, we suggest quasi-pseudo-metrics as appropriate quantitative tool for reconciling topology and social intergenerational preferences. Thus, we develop a metric type method which is able to guarantee possibility counterparts of the aforesaid impossibility theorems and, in addition, it is able to give numerical quantifications of the improvement of welfare. We also show that our method makes always the intergenerational preferences semi-continuous multi-utility representables in the sense of \"{O}zg\"{u} Evern and Efe O. Ok. Finally, in order to keep close to the classical way of measuring in the literature, a refinement of the previous method is presented in such a way that metrics are involved.

\medskip\
{\small
Keywords: Quasi-pseudo-metrics, topology, %(lower) continuity, upper topology, Alexandroff topology, 
Pareto, anonymity, distributive fairness semiconvexity, social welfare (pre)orders, possibility theorem.}

% \PACS{PACS code1 \and PACS code2 \and more}
 \medskip\
{\small{MSC: 91C99,  	06A06, 06F30}}
\end{abstract}

\section{Introduction\label{introduct}}
The intergenerational distribution problem have been studied in depth since the beginning of the twentieth century. In 1907, Henry Sidgwick stated that every rational distributional criterion (social intergenerational preferences) with an infinite horizon must satisfy the finite anonymity (\cite{Sidgwick1970}). Later on, in 1960, Tjalling Koopmands  added to this intergenerational equity requirement the continuity and the impatience (\cite{Koopmans1960}). Then, Peter Diamond showed that the former conditions conflict the continuity requirement in his celebrated Impossibility theorem (\cite{Diamond1965}). Concretely, the aforementioned theorem states a conflict between finite anonymity, impatience (Pareto efficiency) and the continuity with respect to the topology induced by the so-called supremum metric. The finding of Diamond caused several authors to try to discern, on the one hand, whether there exists any distributional criterion satisfying the finite anonymity and impatience at the same time and, on the other hand, whether both conditions can be compatible with continuity with respect $T_0$ any topology. In this direction, Lars-Gunnar Svensson firstly proved the existence of an intergenerational distributional criterion which fulfills simultaneously equity and Pareto efficiency (\cite{svensson}). Secondly, he explored the role of continuity and, thus, he provided an example of intergenerational distributional criterion which satisfies equity, Pareto efficiency and, in addition, continuity. However, this time the continuity was considered with respect to a topology finer than the topology induced by the supremum metric. 

Svensson did not answer completely the question about what topologies can be considered in oder to make continuous the intergenerational distributional criterion when the equity and Pareto efficiency requirements are also under consideration. Motivated, in part, by Sevensson's partial answer to the posed question, Kuntal Banerjee and Tapan Mitra addressed the problem of identifying those topologies that are compatible with equity and Pareto efficiency in \cite{mitra}. To this end, they provided a necessary condition which is expressed in terms of a simplex condition that must be satisfied by the metric inducing the topology. In this case, the considered topologies came from a collection of metrics that belong to a class whose properties are commonly used in the literature and they appear to be natural from a social decision making viewpoint. Of course, the supremum metric and the metric that induces the topology explored by Svensson belong to the aforesaid class. Besides, Banerjee and Mitra prove that, among the topologies induced by the metrics in such a class, the topology considered by Svensson is the coarsest one for which an intergenerational distributional criterion can be continuous, equity and Pareto efficiency.

In the exposed studies, the authors have considered the intergenerational equity and Pareto efficiency expressed by the so-called anonymity and strong Pareto axioms, respectively. The first requirement, anonymity, is an ethical criterion which expresses that every generation must be treated equally regardless how far they are in time. The second one, strong Pareto, exhibits sensitivity to changes in the welfare levels of each generation.  So it seems natural to wonder whether is possible to express both requirements by means of another criteria that bring compatibility with continuity.  

Regarding the strong Pareto axiom, in \cite{FM03} Marc Fleurbaey and Phillippe Michel considered the so-called weak Pareto axiom in order to express the intergenerational efficiency and showed that a stronger version of Diamond's Impossibility theorem can be deduced. Hence they proved that anonymity, weak Pareto and continuity with respect to the topology induced by the supremum metric are also incompatible.  

Toyotaka Sakai introduced a new concept of equity in \cite{sakaiSC}. Specifically, Sakai exposed that anonymity is not able to capture all aspects of intergenerational equity because this requirement expresses that present-biased and future-biased intergenerational distributions must be treated equally and it is not sensitive to balanced distributions. Motivated by this fact, he introduced the distributive fairness semiconvexity axiom which expresses that balanced distributions are preferable to the aforementioned biased intergenerational distributions. Moreover, Sakai proved again the incompatibility of anonymity, distributive fairness semiconvexity and and continuity induced by the supremum metric. Furthermore, a distributive fairness version of Svenson's possibility result was provided by Sakai when Strong Pareto requirement is replaced by (strong) distributive fairness semiconvexity. It must be stressed that the intergenerational preference constructed by Sevensson violates the strong distributive fairness semiconvexity. So the impossibility result due to Sakai is only based on intergenerational ethical requirements because no Pareto axioms are assumed. 

In \cite{sakaiEB}, Sakai introduced a new requirement, that he called sensitivity to the present, which is able to captures in some sense anonymity and distributive fairness semiconvexity in such a way that it is sensitive to changes of utility or welfare of present generations. Concretely, Sakai showed that such an axiom can be derived independently from strong Pareto requirement and from distributive fairness semiconvexity requirement and, in addition, a generalization of Diamond's and Sakai's impossibility theorems was obtained in the sense that sensitivity to the present brings an incompatibility with anonymity and continuity with respect to the supremum metric. 

Motivated by the exposed facts, in this paper we focus our efforts on studying how the intergenerational distributional criteria and the topology can be made compatible. Thus we provide a characterization of those that are continuous-complatible. Such a characterization states that such topologies must be finer than the so-called upper topology and, thus, it  clarifies which topology is the smallest one among those that make the social social intergenerational preferences continuous. Moreover, we provide sufficient conditions that allows us to discard in an easy way the continuity of a preference.

We use our characterization in order to provide possibility counterparts of the above mentioned impossibility theorems due to Diamond, Svensson and Sakai. 
Our methodology is in accordance with the classification, due to Banerjee and Mitra, of the metrics belonging to the class considered in \cite{mitra}. However, the new characterization presents two advantages with respect to the approach given in the aforesaid reference. On the one hand, the new result allows us to decide the continuity of the preference even if the topology under consideration is not metrizable. On the other hand, Banerjee and Mitra only provide a necessary condition. Hence one can find preferences that enjoy anonymity and strong Pareto requirements and, in addition, they fulfill the simplex condition in \cite{mitra} but they are not continuous. An example of this type of preferences is provided. 

The fact that the upper topology is not metrizable (notice that it is not Hausdorff) suggests us that the appropriate quantitative tool for reconciling topology and social intergenerational preferences is exactly provided by quasi-pseudo-metrics which are able to encode the order relation that induce the intergenerational preference. Observe that quasi-pseudo-metrics have  already been applied to economical problems in the literature (see, for instance, \cite{FaugerasRuschendorf18,Levin1984,Levin1997,Levin2008,Levin2011,StoyanovRachevFabozzi12,StoyanovRachevFabozzi11}). This generalized metric notion helps us to provide both things, the numerical quantifications about the increase of welfare and the arrow of such an increase. Note that a metric would be able to yield information on the increase but it, however, will not give the aforementioned arrow.

Based on the fact that every preorder, and thus every social intergenerational preference, can be encoded by means of a quasi-pseudo-metric (see, for instance, \cite{Goubault-Larrecq13}) we develop a method to induce a quasi-pseudo-metric that makes always the preference continuous with respect its induced topology,  the Alexandroff topology generated by the preorder, which is finer than the upper topology. Thus such a method is again able to guarantee possibility counterparts of the celebrate impossibility theorems due to Diamond, Svensson and Sakai and, in addition, it is able to give numerical quantifications of the improvement of welfare.

Since in economics analysis it is convenient to represent preferences through real valued functions (\cite{Mas-ColellWhinstonGreen1995,Varian1992}), the so-called utility functions, we also show that our method makes always the preferences semi-continuous multi-utility representables in the sense of \"{O}zg\"{u} Evern and Efe O. Ok  (\cite{EvO11}). 

Finally, in order to keep close to the classical way of measuring in the literature, a refinement of the previous method is presented in such a way that metrics are involved. The economical interpretations of their quantifications are also exposed.

\section{Preliminaries on preorders and intergenerational preferences}\label{prelim}

In this section we recall the basics on order theory and intergenerational preferences in decision making theory that will be useful in our subsequent discussion.

According to \cite{bridges}, a \emph{preorder} on a non-empty set $X$ is a reflexive and transitive binary relation $\precsim$ on $X$.  A preorder is called a \emph{preference} in \cite{Ok07}. A complete preorder is a \emph{rational preference} in \cite{Mas-ColellWhinstonGreen1995}. Complete preorders are also known as \emph{total preorders} in \cite{bridges,GMetha1998}. Although the preorders has usually been assumed as rational preferences in the literature, the notion of preorder has turned to be very useful in many fields of economics (for a deeper treatment of the topic we refer the reader to \cite{EvO11} and references therein).
 
From now on, given a non-empty set $X$ endowed with a preference  $\precsim$, as usual, we will denote by $x\sim y$ the fact that ($x\precsim y $ and $y\precsim x$). Moreover, $x\prec y$ will denote the fact that $x\precsim y$ and, in addition, not $y\precsim x$. When $x$ and $y$ are incomparable we will write $x\bowtie y$. Thus, $x\bowtie y$ if and only if $\neg(x\precsim y )$ as well as  $\neg(y\precsim x)$.

Following \cite{Ok07} (compare \cite{bridges} and \cite{GMetha1998}), for any $y \in X$ and any relation $\precsim$ on $X$, the \emph{contour sets} (\emph{lower} and \emph{upper}) are defined as follows: 
\begin{enumerate}
\item[(1)]$L^{\precsim}(y)=\{x\in X \colon x\precsim y\}$ (lower counter set),
\item[(2)]$U^{\precsim}(y)=\{x\in X \colon y\precsim x\}$ (upper counter set).
\end{enumerate} \medskip

On account of \cite{Goubault-Larrecq13},  a subset $G$ of a non-empty set $X$ is said to be an \emph{up-set} (or \emph{upward closed}) with respect a preorder $\precsim$ on $X$ provided that $y\in G$ whenever $x,y\in X$ with $x\in G$ and $ x\precsim  y$. Dually, a subset $G$ is said to be a \emph{down-set} (or \emph{downward closed}) with respect a preorder $\precsim$ on $X$ provided that $y\in G$ whenever $x,y\in X$ with  $x\in G $ and $ y\precsim  x$.

According to \cite{GMetha1998} (see also \cite{Ok07} and compare \cite{Mas-ColellWhinstonGreen1995,Varian1992}), a rational preference $\precsim$ on $X$ is called \emph{representable} \rm if there is a real-valued function $u\colon X\to \mathbb R$ that is order-preserving, so that, for every $x, y \in X$, it holds that 
$$x \precsim y \iff u(x) \leq u(y).$$ The map $u$ is said to be a \emph{utility function} \rm for $\precsim$. 

According to \cite{bridges} (see also \cite{GMetha1998}), a rational preference $\precsim$ on $X$ is said to be \emph{separable} (separable in the sense of Debreu in \cite{bridges}) if there exists a countable subset $D \subseteq X$ such that for every $x,y \in X$ with $x \prec y$ there exists $d \in D$ such that $x \precsim d \precsim y$. In the case of separable rational preferences we have it is representable if and only if it is separable.

When the preorder is not total, then a representation can also be proposed. Hence, according to \cite{Ok07,Peleg1970,Ritcher1966}, a preorder is \emph{Ritger-Peleg representable} if there is a function $u\colon X\to \mathbb R$ that is strictly isotonic, so that, for every $x, y \in X$, it holds that 

$$x \precsim y \implies u(x) \leq u(y) \mbox{ and } x\prec y \implies u(x)<u(y).$$ The map $u$ is said to be a \emph{Ritcher-Peleg utility function} for $\precsim$. 

Obviously, a Ritcher-Peleg representation does not characterize the preorder, i.e., the preorder cannot be retrieved, in general, from the Ritcher-Peleg utility function. Motivated by this fact, the multi-utility representation was introduced in \cite{EvO11} (see, also \cite{Levin, Levin2}). In particular, a preorder $\precsim$ on a set $X$ is said to have a {\em multi-utility representation} if  there exists a family $\mathcal{U}$ of isotonic real-valued functions (\emph{weak-utilities}) such that 
for all points $x ,y \in X$  the following equivalence holds:

\begin{equation} \label{mult1} x \precsim y \Leftrightarrow  \forall u \in {\cal U} \,\,(u(x) \leq u(y))\end{equation} Observe that the members of a  multi-utility representation $\mathcal{U}$ are isotonic but they do not need to be strict isotonic in general. This fact make different the multi-utility representation from Ritger-Peleg utility representation. It must be pointed out that a rational preference admits a multi-utility representation even when it is not separable and, thus, it does not admit a utility representation.

The advantage of the multi-utility representation with respect to the above exposed type of representations is twofold. On the one hand, it always exists (see Proposition 1 in \cite{EvO11}). On the other hand, it fully characterizes the preorder.

When discussing about intergenerational distribution criteria the following axioms can be assumed to be satisfied for those preorders that are applied to rank the different alternativies. In the literature a few alternative sets are considered and, usually, all of them are subsets of the set $l_{\infty}=\{(x_n)_{n\in\mathbb{N}}: x_i \text{ with } \sup_{i\in\mathbb{N}}x_i<\infty\}$. 

Let us recall that the most usual alternative sets are $$l_{\infty}^+=\{(x_n)_{n\in\mathbb{N}}\in l_{\infty}: x_i\geq 0 \text{ for all } i\in \mathbb{N}\}$$ and $$l_{\infty}^{[0,1]}=\{(x_n)_{n\in\mathbb{N}}\in l_{\infty}: 0\leq x_i\leq 1 \text{ for all } i\in \mathbb{N}\}.$$ Let us recall that the alternative sets $l_{\infty}^+$ and $l_{\infty}^{[0,1]}$ have been considered, for instance, in \cite{Campbell1985,Epstein1986,FM03,sakaiSC} and \cite{mitra,Diamond1965,sakaiEB,svensson}, respectively. However, the whole space $l_{\infty}$ has been considered in \cite{mitra2,Lawers1997, Sakai06}. 

From now on, an alternative set will be any subset $X$ of $l_{\infty}$, i.e., $X\subseteq l_{\infty}$. Next we recall the below concepts which will play a crucial role in order to state possibility theorems later on. We refer the reader, for instance, to  \cite{mitra,sakaiSC}.

A \emph{finite permutation} is a bijection $\pi\colon \mathbb{N} \to \mathbb{N}$ such that there is  $t_0 \in\mathbb{N}$ satisfying $t = \pi(t),\forall t>t_0$. In the sequel, $\Pi_{\infty}$ will denote the set of all such $\pi$.

A preorder $\precsim$ on $X$ is said to satisfy the \emph{anonymity axiom} if and only if $x\sim\pi(x)$ for all $x \in X$ and for all $\pi\in\Pi$. Anonymity expresses that every generation must be treated equally regardless how far they are in time. However, as exposed in Introduction, such an axiom does not capture all aspects of intergenerational equity because it is not sensitive to balanced distributions. In order to avoid this handicap, \emph{distributive fairness semiconvexity axiom} has been considered. This axiom  expresses that balanced distributions are preferable to the aforementioned biased intergenerational distributions and it can be states as follows:

A preorder $\precsim$ on $X$ is said to satisfy the distributive fairness semiconvexity axiom if and only if for all $x \in X$ and for all $\pi\in\Pi$ we have that  there exists $s\in (0, 1)$ such that $sx + (1-s)\pi(x) \succ x, \pi(x)$ whenever $x\neq\pi(x)$. Moreover, a stronger version of the previous axioms can be expressed via the \emph{strong distributive fairness semiconvexity} which states that a preorder $\precsim$ on $X$ satisfies the strong distributive fairness semiconvexity axiom if and only if for all $x \in X$ and for all $\pi\in\Pi$ we have that  $sx + (1-s)\pi(x) \succ x, \pi(x)$ for all $s\in (0, 1)$ whenever $x\neq\pi(x)$.

An axiom which captures sensitivity to changes in the welfare levels of each generation is called \emph{weak monotonicity axiom} or \emph{weak Pareto axiom}. It can be stated in the following way: 

A preorder $\precsim$ on $X$ is said to be  \emph{weak monotone} or \emph{weak Pareto} if and only if, for all $x,y\in X$, $x \prec y$ provided that $x_t< y_t$ for all $t\in \mathbb{N}$. A stronger version of weak monotonicity axiom is the \emph{strong monotonicity axiom} or \emph{strong Pareto axiom}. Thus, a preorder $\precsim$ is said to be  \emph{strong monotone} or \emph{strong Pareto} if and only if, for all $x,y\in X$, $x \prec y$ provided that $x_t\leq y_t$ for all $t\in \mathbb{N}$ and, in addition, $x \not = y$. Clearly, every strong Pareto preorder is always weak Pareto.

Sensitivity to the present is an axiom which is able to capture, in some sense, anonymity and distributive fairness semiconvexity in such a way that the preorder is sensitive to changes of utility or welfare of present generations. Formally, a preorder $\precsim$ on $X$ satisfies sensitivity to the present provided that, for each $x\in X$, there are $y,z\in X$ and $t\in \mathbb{N}$ such that $(z^{t},^{t+1}x) \prec (y^{t},^{t+1}x)$, where, for each $w\in X$, $(w^{t},^{t+1}x)_{i}=w_i$ for all $i\in\mathbb{N}$ with $i\leq t$ and, in addition, $(w^{t},^{t+1}x)_{i}=x_i$ for all $i\in\mathbb{N}$ with $t+1\leq i$.

In the remainder of the paper, a preorder on $X$ fulfilling any equity requirement (anonymity, distributive fairness semiconvexity or sensitivity to the present) and any monotony (strong or weak) will be called an ethical social welfare preorder. An ethical social welfare preorder that is a rational preference (complete preorder) will be called ethical social welfare order (ethical preference in \cite{svensson}). It is worthy to mention that ethical social welfare preorders and ethical social welfare orders have ben shown to exist in \cite{sakaiSC,svensson}.

\section{The continuity of preferences: a characterization and possibilities theorems}
In this section we study the way through which the intergenerational preferences and the topology can be made compatible. Since two notions of continuity have been taken into account in the intergenerational distribution problems. We provide a characterization of both type of continuities and they are independent of any equity or Pareto requirement. Moreover, we  clarify which topology is the smallest one among those that make the  preorder continuous in both senses. This allows us to solve an open problem in the literature. Partial answers to such a problem have been given by means of the so-called impossibilities theorems which state that there does not exist any ethical social welfare (pre)order which is continuous with respect the topology under consideration (mainly the product topology or the supremum topology on $X\subseteq l_\infty$). Accomplished this item, we apply our characterization in order to get possibility counterparts of the aforementioned impossibility theorems due to Diamond, Svensson and Sakai.

\subsection{The characterization}\label{NewCharact}
First we recall a few pertinent notions from topology that will be very useful in order to achieve our target. 

According to \cite{Goubault-Larrecq13} (see also \cite{Gierzetall2003}), a preorder can be always induced on a topological space $(Y,\tau)$. Such a preorder $\precsim_{\tau}$ is called the \emph{specialization preorder} induced by $\tau$ and it is defined as follows: 

$$x\precsim_{\tau}y \Leftrightarrow \text{ every open subset containing } x \text{ also contains }y.$$ It is not hard to check that $x\precsim_{\tau}y \Leftrightarrow x\in cl_{\tau}(\{y\})$, where by  $cl_{\tau}(\{y\})$  we denote the closure of $\{y\}$ with respect to $\tau$.

It is clear that the specialization preorder allows us to achieve a preorder from every topology. It is known too that every preorder can be obtained as a specialization preorder of some topology (\cite{Goubault-Larrecq13}). However, the correspondence is not bijective, since there are in general many topologies on a set $X$ which induce a given preorder $\precsim$ as their specialization preorder. Among the aforesaid topologies we find the upper topology and the Alexandroff topology. The first one is the coarsest  topology and the second one is the finest topology that induce the preorder $\precsim$ as their specialization preorder. Notice that there are many other topologies between them and that, in general, the Alexandroff and the upper topologies does not coincide. An example that shows that the upper topology and the Alexandroff topology are not the same in general can be found in \cite[Example 1]{BEstRP19}.

Let us recall that, given a preorder $\precsim$ on a non-empty set $X$, the \emph{upper topology} $\tau^{\precsim}_{U}$ is defined as that which has the lower contour set $L^{\precsim}(x)$ closed ($x\in X$), that is, $\tau^{\precsim}_{U}$  is the topology arised from the subbase  $\{Y\setminus L^{\precsim}(x) \}_{x\in X}$. Observe that a preorder $\precsim^{-1}$ can be induced from a preorder $\precsim$ on $X$  as follows: $x \precsim^{-1} y \Leftrightarrow y\precsim x$. The preorder $\precsim^{-1}$ is called the dual preorder or the opposite of $\precsim$. Clearly $L^{\precsim^{-1}}(y)=U^{\precsim}(y)$ for all $y\in Y$. Taking this into account, we will denote by $\tau^{\precsim}_{L}$ the upper topology on $Y$ induced by $ \precsim^{-1}$. Notice that such a topology matches up with the \emph{lower topology} induced by $\precsim$ on $X$, that is, the topology whose subbase is $\{Y\setminus U^{\precsim}(y)\}_{y\in X}$. 

%pag 370 Enciclopedia ??

Usually intergenerational preferences are assumed to satisfy that two intertemporal distribution that are not very different must be have similar welfare levels. This is  accomplished by assuming that the preorder under consideration is continuous. Let us recall the two usual notions of continuity.

A preorder $\precsim$ on a topological space $(Y,\tau)$ is said to be \emph{$\tau$-continuous} if, for all $y\in Y$, the lower contour $L^{\precsim}(x)$ and the upper contour $U^{\precsim}(x)$ are closed with respecto to $\tau$ (see, for instance, \cite{Diamond1965,sakaiSC,sakaiEB,svensson}). However, a weak form of continuity is sated in the literature, the so-called lower continuity (among others, see \cite{mitra,EvO11,Sakai06}). Thus, a preorder on a topological space is said to be \emph{lower $\tau$-continuous} 
provided that, for all $y\in Y$, the lower contour $L^{\precsim}(x)$ is closed with respecto to $\tau$. 
 \medskip

From now on, given a preorder $\precsim$ on $Y$ and $x_1,\ldots,x_n\in Y$, we will set $$\downarrow_{\precsim} \{x_1,\ldots,x_n\}=\{z \text{ such that there exists } i\in \{1,\ldots,n\} \text{ with } z\precsim x_i\}.$$ Dually $\uparrow_{\precsim} \{x_1,\ldots,x_n\}$ can be defined. 
 \medskip
 
In view of the exposed facts we introduced the promised characterization of both type of continuities.  

\begin{theorem}\label{Cont1}Let $\precsim$ be a preorder on a topological space $(Y,\tau)$. Then the following assertions are equivalent:

\begin{enumerate}

\item[(1)] $\precsim$ is $\tau$-continuous.

\item[(2)] The topology $\tau$ is finer than the coarsest topology including $\tau^{\precsim}_{U}$ and $\tau^{\precsim}_{L}$.

\end{enumerate}
\end{theorem}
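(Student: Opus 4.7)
The plan is to unpack the definitions carefully and observe that both implications reduce to a single elementary fact about subbases. Recall that, by the definitions recalled in Subsection~\ref{NewCharact}, the topology $\tau^{\precsim}_{U}$ has $\{Y\setminus L^{\precsim}(x)\}_{x\in Y}$ as a subbase, and $\tau^{\precsim}_{L}$ has $\{Y\setminus U^{\precsim}(x)\}_{x\in Y}$ as a subbase. Thus the coarsest topology containing both of them, which I will denote by $\tau^{\precsim}_{U}\vee\tau^{\precsim}_{L}$, is precisely the topology generated by the subbase
$$\mathcal{S}=\{Y\setminus L^{\precsim}(x):x\in Y\}\cup\{Y\setminus U^{\precsim}(x):x\in Y\}.$$
Simultaneously, $\tau$-continuity of $\precsim$ is, by definition, the assertion that every member of $\mathcal{S}$ belongs to $\tau$.

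For the direction (1)$\Rightarrow$(2), assume $\precsim$ is $\tau$-continuous. Then $L^{\precsim}(x)$ and $U^{\precsim}(x)$ are $\tau$-closed for each $x\in Y$, so every member of $\mathcal{S}$ is $\tau$-open. Since $\tau$ is itself a topology and $\tau^{\precsim}_{U}\vee\tau^{\precsim}_{L}$ is the smallest topology containing $\mathcal{S}$, I conclude $\tau^{\precsim}_{U}\vee\tau^{\precsim}_{L}\subseteq\tau$, i.e., $\tau$ is finer.

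For (2)$\Rightarrow$(1), assume $\tau\supseteq\tau^{\precsim}_{U}\vee\tau^{\precsim}_{L}$. Since $\mathcal{S}\subseteq\tau^{\precsim}_{U}\vee\tau^{\precsim}_{L}\subseteq\tau$, each set $Y\setminus L^{\precsim}(x)$ and $Y\setminus U^{\precsim}(x)$ is $\tau$-open, hence $L^{\precsim}(x)$ and $U^{\precsim}(x)$ are $\tau$-closed for every $x\in Y$, which is exactly $\tau$-continuity of $\precsim$.

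No step really presents an obstacle here; the only point that requires a moment of care is the observation that a topology contains the join of two other topologies precisely when it contains the union of their subbases, which follows from the minimality of the topology generated by a family. I would spell this out once at the beginning of the proof to make the equivalence a direct consequence of matching two descriptions of the same collection of sets.
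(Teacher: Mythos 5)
Your proof is correct and follows essentially the same route as the paper's: both arguments come down to the observation that $\tau$-continuity says exactly that every subbasic set $Y\setminus L^{\precsim}(x)$ and $Y\setminus U^{\precsim}(x)$ is $\tau$-open, and that a topology contains the join $\tau^{\precsim}_{U}\vee\tau^{\precsim}_{L}$ precisely when it contains all of these. The only difference is presentational — the paper unwinds the generated topology explicitly via finite intersections of subbasic sets and arbitrary unions, whereas you invoke the minimality of the topology generated by a subbase; this is a cleaner packaging of the same idea.
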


\begin{proof}$(1)\Rightarrow (2)$. First we show that $\tau^{\precsim}_{L}\subseteq \tau$. Let $A\in  \tau^{\precsim}_U$. Then, given $x\in A$, there exist $x_1,\ldots,x_n\in X$ such that $x\in Y\setminus \downarrow_{\precsim} \{x_1,\ldots,x_n\}\subseteq A$. Moreover, $Y\setminus \downarrow_{\precsim} \{x_1,\ldots,x_n\}=Y\setminus \bigcup_{i=1}^{n}L^{\precsim}(x_i)=\bigcap_{i=1}^{n}Y\setminus L^{\precsim}(x_i)$. Since $Y\setminus L^{\precsim}(x_i)\in \tau$ we deduce that $\bigcap_{i=1}^{n}X\setminus L^{\precsim}(x_i)\in \tau$. Then $x\in \bigcap_{i=1}^{n}X\setminus L^{\precsim}(x_i)\subseteq A$. It follows that $A\in \tau$. Hence $\tau^{\precsim}_{U}\subseteq \tau$.
\medskip

Next we show that $\tau^{\precsim}_{L}\subseteq \tau$. To this end,  let $A\in  \tau^{\precsim}_L$. Then, given $x\in A$, there exist $x_1,\ldots,x_n\in X$ such that $x\in Y\setminus \uparrow_{\precsim} \{x_1,\ldots,x_n\}\subseteq A$. Moreover, $Y\setminus \uparrow_{\precsim} \{x_1,\ldots,x_n\}=Y\setminus \bigcup_{i=1}^{n}U^{\precsim}(x_i)=\bigcap_{i=1}^{n}X\setminus U^{\precsim}(x_i)$. Since $Y\setminus U^{\precsim}(x_i)\in \tau$ we deduce that $\bigcap_{i=1}^{n}X\setminus U^{\precsim}(x_i)\in \tau$. Then $x\in \bigcap_{i=1}^{n}X\setminus U^{\precsim}(x_i)\subseteq A$. It follows that $A\in \tau$. Hence $\tau^{\precsim}_{L}\subseteq \tau$.
\medskip

The precedig facts joint with the fact that the coarsest topology including $\tau^{\precsim}_{U}$ and $\tau^{\precsim}_{L}$ is formed by all finite intersection of elements in $\{\tau^{\precsim}_{U},\tau^{\precsim}_{L}\}$ and all arbitrary unions of these finite intersections gives immediately that $\tau$ is a topology finer than it. 
\medskip

$(2)\Rightarrow (1)$. Take $y\in X$ and consider $L^{\precsim}(y)$ and $U^{\precsim}(y)$. Then $Y\setminus L^{\precsim}(y)\in \tau^{\precsim}_U$ and $Y\setminus U^{\precsim}(y)\in \tau^{\precsim}_L$. Since $\tau^{\precsim}_{U},\tau^{\precsim}_{L}\subseteq  \tau$ we deduce that $Y\setminus L^{\precsim}(y),Y\setminus U^{\precsim}(y)\in \tau$. It follows that $\precsim$ is $\tau$-continuous.
\end{proof}
\medskip

The next result, which characterizes the lower continuity, can be found in \cite[Corollary 1]{BEstRP19}. Although it was stated without proof in the aforesaid reference, we have omitted its proof because it follows similar arguments to those given in the proof of Theorem \ref{Cont1}.

\begin{theorem}\label{Cont2}Let $\precsim$ be a preorder on a topological space $(Y,\tau)$. Then the following assertions are equivalent:

\begin{enumerate}

\item[(1)] $\precsim$ is lower $\tau$-continuous.
\item[(2)]  The topology $\tau$ is finer than $\tau^{\precsim}_{U}$.

\end{enumerate}
\end{theorem}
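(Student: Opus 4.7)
The plan is to mirror the proof of Theorem \ref{Cont1}, but restricted to lower contour sets, since lower $\tau$-continuity only asks closedness of $L^{\precsim}(y)$ and the upper topology $\tau^{\precsim}_{U}$ is the one generated by the complements of these sets. In fact the proof simplifies considerably: one does not need to pass through the joined topology generated by $\tau^{\precsim}_{U}$ and $\tau^{\precsim}_{L}$, only the single subbase $\{Y\setminus L^{\precsim}(x)\}_{x\in Y}$ is relevant.

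For the implication $(1)\Rightarrow(2)$, I would argue directly at the level of the subbase. Assuming lower $\tau$-continuity, every lower contour set $L^{\precsim}(x)$ is $\tau$-closed, so its complement $Y\setminus L^{\precsim}(x)$ lies in $\tau$. Since these complements form a subbase for $\tau^{\precsim}_{U}$, and $\tau$ is a topology (hence closed under finite intersections and arbitrary unions), every element of $\tau^{\precsim}_{U}$ belongs to $\tau$. One can phrase this exactly as in the first half of the proof of Theorem \ref{Cont1}: given $A\in\tau^{\precsim}_{U}$ and $x\in A$, pick $x_1,\ldots,x_n$ with $x\in Y\setminus\downarrow_{\precsim}\{x_1,\ldots,x_n\}=\bigcap_{i=1}^{n}Y\setminus L^{\precsim}(x_i)\subseteq A$, and observe each factor lies in $\tau$.

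For $(2)\Rightarrow(1)$, I would simply fix $y\in Y$ and note that $Y\setminus L^{\precsim}(y)$ is a member of the subbase of $\tau^{\precsim}_{U}$, hence belongs to $\tau^{\precsim}_{U}$, and therefore to $\tau$ by hypothesis. Consequently $L^{\precsim}(y)$ is $\tau$-closed, which is exactly lower $\tau$-continuity.

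There is no real obstacle here: the argument is essentially a bookkeeping exercise about subbases, and is strictly contained in the argument already carried out for Theorem \ref{Cont1}. This is presumably why the authors decline to reproduce it. The only point worth being careful about is making explicit that $\{Y\setminus L^{\precsim}(x)\}_{x\in Y}$ is indeed a subbase (not an arbitrary family) of $\tau^{\precsim}_{U}$, so that verifying membership of the subbase inside $\tau$ suffices to conclude $\tau^{\precsim}_{U}\subseteq\tau$.
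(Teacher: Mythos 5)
Your proposal is correct and is exactly the argument the paper has in mind: the authors omit the proof precisely because it is the upper-topology half of the proof of Theorem \ref{Cont1}, which is what you reproduce. Both directions are sound — subbase members of $\tau^{\precsim}_{U}$ lie in $\tau$ under lower continuity, hence so does the whole topology, and conversely $\tau^{\precsim}_{U}\subseteq\tau$ immediately makes each $L^{\precsim}(y)$ closed — so there is nothing to add.
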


The preceding characterizations state that the topologies that can be taken under consideration in order to make, on the one hand,  continuous the preorder must be finer than the coarsest topology including the upper and lower topologies induced by the preorder and, on the other hand, must be finer than the upper topology with the aim of warranting the lower continuity.  Thus, it clarifies which topology is the smallest one among those that guarantee such continuities. In the light of this, it makes no sense to work with a topologies which does not refine the aforesaid ones. 

Notice that these results turn out key when the continuity of ethical social welfare preorders and orders is discussed. This fact will be exploited in the the next subsection where we introduce possibility theorems, i.e., theorems that reconcile  social welfare (pre)orders and the topology making them continuous. Observe that the preceding results, on the one hand, answer to a question that has been discussed a lot in the literature and, on the other hand, improves the result given in \cite[Theorem 1]{mitra}.
\medskip

Going back to the specialization preorder, let us recall that, given a preorder $\precsim$ on $Y$, the {\em Alexandroff topology} $\tau^{\precsim}_A$ is formed by all {up-sets} with respect to $\precsim$.  Observe that the lower sets are   closed sets with respect to $\tau^{\precsim}_A$.

From the preceding characterizations we obtain the following ones which give sufficient conditions to make continuous a preorder.

\begin{corollary}\label{Alex1} Let $\precsim$ be a preorder on a topological space $(Y,\tau)$. If $\tau$ is finer than $\tau^{\precsim}_A$ and $\tau^{\precsim^{-1}}_A$, then $\precsim$ is $\tau$-continuous.
\end{corollary}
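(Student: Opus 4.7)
The plan is to reduce the statement to Theorem \ref{Cont1} by showing that the Alexandroff topologies $\tau^{\precsim}_{A}$ and $\tau^{\precsim^{-1}}_{A}$ already contain the upper and lower topologies $\tau^{\precsim}_{U}$ and $\tau^{\precsim}_{L}$ respectively. Since the hypothesis gives $\tau^{\precsim}_{A}\subseteq \tau$ and $\tau^{\precsim^{-1}}_{A}\subseteq \tau$, this will put $\tau$ above the coarsest topology containing $\tau^{\precsim}_{U}$ and $\tau^{\precsim}_{L}$, and continuity will follow at once from Theorem \ref{Cont1}.

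First I would verify the inclusion $\tau^{\precsim}_{U}\subseteq \tau^{\precsim}_{A}$. It suffices to check this on the subbase $\{Y\setminus L^{\precsim}(x)\}_{x\in Y}$ of $\tau^{\precsim}_{U}$. For each fixed $x\in Y$, the set $L^{\precsim}(x)$ is a down-set: if $z\in L^{\precsim}(x)$ (i.e., $z\precsim x$) and $w\precsim z$, then transitivity yields $w\precsim x$, so $w\in L^{\precsim}(x)$. Consequently its complement $Y\setminus L^{\precsim}(x)$ is an up-set with respect to $\precsim$, and therefore a member of $\tau^{\precsim}_{A}$. Since $\tau^{\precsim}_{A}$ is a topology it must contain the entire topology generated by these subbasic sets, namely $\tau^{\precsim}_{U}$.

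An analogous argument establishes $\tau^{\precsim}_{L}\subseteq \tau^{\precsim^{-1}}_{A}$. Indeed, $\tau^{\precsim}_{L}$ coincides with the upper topology of the dual preorder $\precsim^{-1}$, so $U^{\precsim}(x)=L^{\precsim^{-1}}(x)$ is a down-set with respect to $\precsim^{-1}$, and its complement $Y\setminus U^{\precsim}(x)$ is an up-set with respect to $\precsim^{-1}$, hence lies in $\tau^{\precsim^{-1}}_{A}$. The same subbase reasoning then gives $\tau^{\precsim}_{L}\subseteq \tau^{\precsim^{-1}}_{A}$.

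Combining the two inclusions with the hypotheses yields $\tau^{\precsim}_{U}\cup \tau^{\precsim}_{L}\subseteq \tau$, so $\tau$ refines the coarsest topology containing both $\tau^{\precsim}_{U}$ and $\tau^{\precsim}_{L}$. Theorem \ref{Cont1} then delivers the $\tau$-continuity of $\precsim$. I do not foresee a serious obstacle; the only point where care is needed is keeping straight the direction of the refinement (finer means more open sets) and the correspondence between up-sets for $\precsim$ and down-sets for $\precsim^{-1}$.
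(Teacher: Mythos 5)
Your argument is correct and follows exactly the paper's route: the paper's proof of this corollary simply invokes the inclusions $\tau^{\precsim}_{U}\subseteq\tau^{\precsim}_{A}$ and $\tau^{\precsim}_{L}\subseteq\tau^{\precsim^{-1}}_{A}$ and then applies Theorem \ref{Cont1}. You additionally justify those inclusions (complements of lower contour sets are up-sets), which the paper leaves implicit; this is a welcome but inessential elaboration.
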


\begin{proof} Since $\tau^{\precsim}_U\subseteq \tau^{\precsim}_A$ and $\tau^{\precsim}_L\subseteq \tau^{\precsim^{-1}}_A$ we conclude, from Theorem \ref{Cont1}, that $\precsim$ is $\tau$-continuous.

\end{proof}

\begin{corollary}\label{Alex2} Let $\precsim$ be a preorder on a topological space $(Y,\tau)$. If $\tau$ is finer than $\tau^{\precsim}_A$, then $\precsim$ is lower $\tau$-continuous.
\end{corollary}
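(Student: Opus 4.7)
The plan is to reduce this corollary directly to Theorem \ref{Cont2} by establishing the chain of inclusions $\tau^{\precsim}_{U} \subseteq \tau^{\precsim}_{A} \subseteq \tau$, exactly as was done in the proof of Corollary \ref{Alex1}. Given that Theorem \ref{Cont2} already characterizes lower $\tau$-continuity as being equivalent to $\tau$ refining $\tau^{\precsim}_{U}$, the whole task boils down to verifying that the upper topology is always coarser than the Alexandroff topology.

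To see $\tau^{\precsim}_{U} \subseteq \tau^{\precsim}_{A}$, I would first recall that $\tau^{\precsim}_{U}$ is generated by the subbase $\{Y\setminus L^{\precsim}(x)\}_{x\in Y}$, while $\tau^{\precsim}_{A}$ consists of all up-sets with respect to $\precsim$. It then suffices to observe that each subbasic set $Y\setminus L^{\precsim}(x)$ is itself an up-set: if $z\in Y\setminus L^{\precsim}(x)$ and $z\precsim w$, then $w\precsim x$ together with transitivity would force $z\precsim x$, contradicting $z\notin L^{\precsim}(x)$; so $w\in Y\setminus L^{\precsim}(x)$. Since up-sets are trivially closed under arbitrary unions and finite intersections, every element of $\tau^{\precsim}_{U}$ is an up-set, i.e., belongs to $\tau^{\precsim}_{A}$.

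Combining this inclusion with the hypothesis $\tau^{\precsim}_{A}\subseteq \tau$ yields $\tau^{\precsim}_{U}\subseteq \tau$, and Theorem \ref{Cont2} immediately delivers the lower $\tau$-continuity of $\precsim$. There is no real obstacle: the only content is the inclusion of the upper topology into the Alexandroff topology, which is a one-line consequence of transitivity, so the corollary is formally a rephrasing of Theorem \ref{Cont2} in a slightly coarser (and more easily checkable) sufficient form.
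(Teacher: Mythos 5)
Your argument is correct and follows exactly the paper's route: the paper likewise deduces the corollary from Theorem \ref{Cont2} via the inclusion $\tau^{\precsim}_{U}\subseteq\tau^{\precsim}_{A}$, which it states without proof and which you justify by the (correct) observation that each subbasic set $Y\setminus L^{\precsim}(x)$ is an up-set. Your write-up merely makes explicit a step the paper leaves implicit; there is no substantive difference.
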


\begin{proof} Since $\tau^{\precsim}_U\subseteq \tau^{\precsim}_A$ we conclude, from Theorem \ref{Cont2}, that $\precsim$ is lower $\tau$-continuous.

\end{proof}

The next example shows that the converse of Corollaries \ref{Alex1} and \ref{Alex2} do not hold in general. In order to introduce such an example, notice that a sequence $(x_n)_{n\in\mathbb{N}}$ in $Y$ converges to $x\in Y$ with respect to $\tau^{\precsim}_A$ if and only if there exists $n_0\in\mathbb{N}$ such that $x\precsim x_n$ for all $n\geq n_0$.

\begin{example}\label{SvenEx1}Consider the preorder $\precsim$ on $l_{\infty}^{[0,1]}$ defined by 
$$y\precsim x \Leftrightarrow y_t\leq x_t \text{ for all } t\in \mathbb{N}.$$ Then $\precsim$ is $\tau_{d_s}$-continuos and, thus, lower $\tau_{d_s}$-continuos, where $d_s$ stands for the restriction of the supremum metric on $l_{\infty}$ to $l_{\infty}^{[0,1]}$, i.e., $d_s(x,y)=\sup_{t\in\mathbb{N}} |x_t-y_t|$ for all $x,y\in l_{\infty}$.

Next we show that $\tau^{\precsim}_A\not\subseteq \tau_{d_s}$. Indeed,  set $x=(0,1,0,\frac{1}{2}, \frac{2}{2}, 0, \frac{1}{3}, \frac{2}{3}, \frac{3}{3}, 0,...)$ and $l=(1,1,0,\frac{1}{2}, \frac{2}{2}, 0, \frac{1}{3}, \frac{2}{3}, \frac{3}{3}, 0,...)$. Now the sequence $(y_n)_{n\in\mathbb{N}}$ is defined as follows:
 \begin{center}
 $ y_1=x=(0,1,0,\frac{1}{2}, \frac{2}{2}, 0, \frac{1}{3}, \frac{2}{3}, \frac{3}{3}, 0,...)$,\\
 $ y_2=(\frac{2}{2},1,0,0, \frac{1}{2}, 0, \frac{1}{3}, \frac{2}{3}, \frac{3}{3}, 0,...)$,\\
 $ y_3=(\frac{3}{3},1,0,\frac{1}{2}, \frac{2}{2}, 0, 0, \frac{1}{3}, \frac{2}{3}, 0,...)$,\\
 $...$\\
 $y_n=(\frac{n}{n},1,0, \frac{1}{2}, \frac{2}{2}, 0, \frac{1}{3}, \frac{2}{3}, \frac{3}{3}, 0,...,0,0, \frac{1}{n},..., \frac{n-1}{n},0,...)$
 \end{center}

Clearly the sequence $(y_n)_{n\in \mathbb{N}}$ converges to $l=(1,1,0,\frac{1}{2}, \frac{2}{2}, 0, \frac{1}{3}, \frac{2}{3}, \frac{3}{3}, 0,...)$ on $\tau_s$, since $d_s(l,y_n)=\frac{1}{n}$. However the sequence fails to converge in $\tau^{\precsim}_A$, since $l\not\precsim y_t$ for any $t\in \mathbb{N}$.
 \end{example}

As exposed before, in economics analysis it is convenient to represent preorders through real valued functions (\cite{Mas-ColellWhinstonGreen1995,Varian1992}). We end this subsection giving conditions so that a preorder admits a semi-continuous multi-utility representation in the sense of \"{O}zg\"{u} Evern and Efe O. Ok  (\cite{EvO11}).

Let us recall that, given a topological space $(Y,\tau)$, a function $f\colon Y\to \mathbb{R}$ which is continuous from $(Y,\tau)$ into $(\mathbb{R},\tau^{\leq}_U)$ is said to be 
\emph{lower semi-continuous}. 

According to \cite[Proposition 2]{EvO11}, every (pre)order $\precsim$ on a topological space $(Y,\tau)$ which is lower $\tau$-continuous always have a multi-utility representation $\mathcal{U}$ of isotonic real-valued functions such that every member belonging to $\mathcal{U}$ is a lower semi-continuous function.  

In the light of Theorems \ref{Cont1} and \ref{Cont2} we conclude stating that every (pre)order $\precsim$ on a topological space $(Y,\tau)$ admits a semi-continuous multi-utility representation provided that $\tau$ is finer than  $\tau^{\precsim}_{U}$.

\subsection{The possibilities theorems}\label{Possi}
Based on our characterizations we present possibility counterparts of the impossibility theorems due to Diamond, Svensson and Sakai. Besides we show that our results are in accordance with the classification, due to Banerjee and Mitra, of the metrics belonging to the class considered in \cite{mitra}.  Nevertheless, we will show that our new characterizations presents two advantages with respect to the approach given in the aforesaid reference. 

As exposed before, Diamond showed in his celebrated Impossibility theorem a conflict between the fact that a preorder satisfies the finite anonymity, strong monotonicity and the continuity with respect to the topology induced by the supremum metric $\tau_{d_s}$, with $d_s(x,y)=\sup_{i\in\mathbb{N}}|x_t-y_t|$ for all $x,y\in l_{\infty}$,  (\cite{Diamond1965}). The aforesaid theorem can be stated as follows.

\begin{theorem}\label{Diamond}There is no any ethical social welfare (pre)order $\precsim$ on $l_{\infty}^{[0,1]}$ which satisfies anonymity, strong monotonicity and $\tau_{d_{s}}$-continuity.
\end{theorem}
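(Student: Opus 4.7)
The plan is to argue by contradiction, using Theorem \ref{Cont1} to translate $\tau_{d_s}$-continuity into the requirement that every contour set of $\precsim$ be $\tau_{d_s}$-closed, and then to exhibit a sup-convergent sequence whose limit falls outside a contour set in which anonymity and strong monotonicity force its tail to lie.

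First I would assume, towards contradiction, that there is a preorder $\precsim$ on $l_{\infty}^{[0,1]}$ satisfying anonymity, strong monotonicity, and $\tau_{d_s}$-continuity. By Theorem \ref{Cont1}, $\tau_{d_s}$ must refine the coarsest topology containing $\tau^{\precsim}_{U}$ and $\tau^{\precsim}_{L}$; equivalently, for every $y \in l_{\infty}^{[0,1]}$ both $L^{\precsim}(y)$ and $U^{\precsim}(y)$ are $\tau_{d_s}$-closed. This reduction is the only contribution of the continuity hypothesis to the argument, and it replaces the topological statement by a purely set-theoretic one about the lattice of contour sets.

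Next I would build a witness to failure of this closedness. The idea is to exploit the contrast between the two remaining axioms: strong monotonicity guarantees that a small perturbation supported far in time is a strict Pareto improvement, while anonymity guarantees that any such perturbation is equivalent to one supported at an early coordinate via a finite swap. Concretely, I would pick a point $y$ and consider sequences of the form $z_{n}=y+\tfrac{1}{n}e_{k_{n}}$ with $k_{n}\to \infty$; by strong monotonicity $z_{n}\succ y$, and by anonymity $z_{n}\sim z'_{n}$ for a suitable $z'_{n}$ obtained by permuting coordinates $1$ and $k_{n}$. Following Diamond's original scheme, one chains these equivalences and sup-limits to extract a sequence $(w_{n})$ with $w_{n}\to w$ in $\tau_{d_s}$, $w_{n}\in U^{\precsim}(y)$ for every $n$, but $w\notin U^{\precsim}(y)$; this contradicts the closedness established in the first step and completes the argument.

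The main obstacle is precisely the simultaneous accommodation of two requirements on the witness sequence. Sup-metric convergence demands that the perturbation shrink uniformly across all coordinates, whereas anonymity is only informative when the perturbation is relocated to a coordinate where the original sequence was already settled. Resolving this tension is the crux of Diamond's original argument, and it is also what allows our characterization to subsume the classical impossibility: once the appropriate anonymity-twin sequence is in place, the closedness of the contour sets collapses immediately through the conclusion of Theorem \ref{Cont1}, showing that no such preorder $\precsim$ can exist.
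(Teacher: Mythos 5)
There is a genuine gap here, and before describing it I should note that the paper itself does not prove this statement: Theorem \ref{Diamond} is quoted as Diamond's classical result and attributed to \cite{Diamond1965}, so there is no internal proof to compare against. Your proposal therefore has to stand on its own, and it does not. The first step, invoking Theorem \ref{Cont1}, is harmless but vacuous: $\tau_{d_s}$-continuity is \emph{defined} in Section 3.1 as the $\tau_{d_s}$-closedness of all lower and upper contour sets, so the ``reduction'' merely restates the hypothesis. All of the mathematical content of the theorem lies in the construction of the witness sequence, and that is precisely the part you defer to ``Diamond's original scheme'' without supplying it.

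Worse, the construction you do sketch fails. Taking $z_n = y + \tfrac{1}{n}e_{k_n}$ gives $y \prec z_n$ by strong monotonicity and $z_n \in U^{\precsim}(y)$, but $d_s(z_n,y)=\tfrac{1}{n}\to 0$, so the sup-limit of $(z_n)$ is $y$ itself, which lies in $U^{\precsim}(y)$ by reflexivity; the anonymity twins $z'_n$ (swapping coordinates $1$ and $k_n$) likewise converge to $y$ when $y$ has settled coordinates. No contour set fails to be closed along such a sequence, so no contradiction is reached. The actual argument requires perturbations whose sup-norm shrinks to zero while their aggregate effect does \emph{not} vanish --- e.g.\ spreading a fixed total mass over $n$ coordinates, as in the simplex condition of Banerjee and Mitra (Proposition \ref{Simplex} of this paper) --- and then a nontrivial chain of finite permutations and Pareto comparisons to pin the whole sequence inside $U^{\precsim}(c)$ for some fixed $c$ strictly above the limit. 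That chaining is the crux you explicitly acknowledge but never carry out, so the proposal establishes nothing beyond the (definitional) first paragraph.
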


The Diamond's impossibility theorem was extended to the case of preorders fulfilling weak monotonicity by Fleurbaey and Michel in  \cite{FM03}. Concretely they proved the next result. 
  
\begin{theorem}\label{Diamond2}There is no any ethical social welfare (pre)order $\precsim$ on $l_{\infty}^{+}$ which satisfies anonymity, weak monotonicity and $\tau_{d_{s}}$-continuity.
\end{theorem}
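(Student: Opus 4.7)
My plan is to argue by contradiction, following the strategy of Fleurbaey and Michel in \cite{FM03}. Suppose that some ethical social welfare preorder $\precsim$ on $l_{\infty}^{+}$ satisfies anonymity, weak monotonicity and $\tau_{d_s}$-continuity. I aim to produce a reference point $v\in l_{\infty}^{+}$ with strictly positive coordinates, together with a sequence $\{u^n\}$ in its $\sim$-equivalence class $[v]_{\sim}$ converging in the supremum metric to some $u\in l_{\infty}^{+}$ satisfying $u_t<v_t$ for every $t\in\mathbb{N}$. Because $\tau_{d_s}$-continuity makes $L^{\precsim}(v)\cap U^{\precsim}(v)$ closed, the limit $u$ lies in this intersection and therefore $u\sim v$; weak monotonicity on the other hand forces $u\prec v$, a direct contradiction.

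The building block for the construction combines anonymity with $\tau_{d_s}$-continuity: starting from the vector $w=(0,b_1,b_2,\ldots)$ with $(b_k)$ strictly positive, strictly decreasing and tending to $0$, the cyclic permutation $\pi_n$ which pushes the null coordinate $w_1$ out to position $n+1$ produces a sequence $\pi_n(w)\sim w$ (by anonymity) whose supremum-metric limit is $(b_1,b_2,b_3,\ldots)$, since the push-out discrepancy equals $b_{n+1}\to 0$ exactly because $\sup w$ is not attained. The closedness of the contour sets then delivers $(0,b_1,b_2,\ldots)\sim(b_1,b_2,b_3,\ldots)$, an equivalence that crosses multiset structures.

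The technical heart of the argument is to chain such building blocks in a carefully designed manner---prepending further buffering coordinates to intermediate stages, pushing them out via further cyclic permutations, and interlacing these anonymity-continuity steps with small sup-norm perturbations---so that, starting from a fixed $v$ with strictly positive coordinates, one navigates through $[v]_{\sim}$ down to a vector $u$ strictly Pareto-dominated by $v$. The main obstacle is precisely the rigidity of finite permutations: they preserve the multiset of entries, and the supremum metric respects this invariant closely, so a single permutation orbit cannot by itself yield a sup-norm limit coordinate-wise strictly below $v$. The full use of $\tau_{d_s}$-continuity is required to splice successive equivalences across distinct multiset structures, which in the language of our Theorem \ref{Cont1} corresponds to the impossibility of $\tau_{d_s}$ refining the coarsest topology containing $\tau^{\precsim}_U$ and $\tau^{\precsim}_L$ once both anonymity and weak monotonicity are imposed on $\precsim$.
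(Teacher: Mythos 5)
The paper does not actually prove this theorem: it is quoted from Fleurbaey and Michel \cite{FM03}, so there is no internal proof to compare against. Judged on its own terms, your proposal has the right overall strategy (derive, from anonymity plus closedness of the contour sets under $d_s$-limits, an indifference $u\sim v$ that weak Pareto forces to be a strict preference), and your building block is correct: for $w=(0,b_1,b_2,\ldots)$ with $b_k$ strictly positive, strictly decreasing and tending to $0$, the cyclic permutations give $d_s(\pi_n(w),(b_1,b_2,\ldots))=b_{n+1}\to 0$, and closedness of $L^{\precsim}(w)\cap U^{\precsim}(w)$ yields $(0,b_1,b_2,\ldots)\sim(b_1,b_2,\ldots)$. (The reason the discrepancy vanishes is that the pushed-out value $0$ equals the limit of the tail $b_k\to 0$, not that the supremum is unattained, but the computation stands.)

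The genuine gap is that this building block alone yields no contradiction, and the step that would --- producing $u\sim v$ with $u_t<v_t$ for \emph{every} $t$ --- is exactly the part you leave as a plan (``chain such building blocks in a carefully designed manner\ldots interlacing these anonymity-continuity steps with small sup-norm perturbations''). Note that the two streams your lemma makes indifferent, $(0,b_1,b_2,\ldots)$ and $(b_1,b_2,\ldots)$, are not Pareto-ranked even in the strong sense (the first coordinate rises from $0$ to $b_1$ while the second falls from $b_1$ to $b_2$), so one application proves nothing; and iterating it only prepends further zeros, never producing coordinatewise strict domination. The ``small sup-norm perturbations'' you invoke do not preserve the indifference class --- each one would itself have to be justified by a fresh continuity-plus-anonymity argument --- and to reach $u_t<v_t$ for all $t$ their cumulative effect must be bounded away from zero in every coordinate while each individual step stays controlled. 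Designing such a chain is precisely the substance of Fleurbaey and Michel's theorem (and is delicate: with only weak Pareto one cannot conclude from a single lowered coordinate), so as written the proposal asserts the conclusion at the point where the work begins. The closing appeal to Theorem~\ref{Cont1} is a reformulation of what must be shown ($\tau_{d_s}$ cannot refine $\tau^{\precsim}_{U}$ and $\tau^{\precsim}_{L}$ for such a $\precsim$), not an argument for it.
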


In \cite{sakaiSC}, Sakai introduced introduced the distributive fairness semiconvexity in order to overcome the lack of sensitivity of anonymity to balanced distributions. He proved again the incompatibility of anonymity, distributive fairness semiconvexity and continuity induced by the supremum metric. Specifically the next result was obtained.

\begin{theorem}\label{Sakai}There is no any ethical social welfare (pre)order  $\precsim$ on $l_{\infty}^{+}$ which satisfies anonymity, distributive fairness semiconvexity and $\tau_{d_{s}}$-continuity. 
\end{theorem}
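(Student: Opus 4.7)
The plan is to argue by contradiction. Suppose such an ethical social welfare (pre)order $\precsim$ exists on $l_{\infty}^{+}$. First I would exhibit a concrete testing configuration: take $x = (1, 0, 0, \ldots) \in l_{\infty}^{+}$ and let $\pi$ denote the transposition of coordinates $1$ and $2$, so that $\pi(x) = (0, 1, 0, \ldots) \neq x$. Anonymity yields $x \sim \pi(x)$, and distributive fairness semiconvexity then furnishes some $s_{0} \in (0, 1)$ with $z_{0} := s_{0} x + (1 - s_{0}) \pi(x) = (s_{0}, 1 - s_{0}, 0, 0, \ldots) \succ x$.

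The heart of the argument is to construct a sequence $(w_{n}) \subseteq l_{\infty}^{+}$ such that $w_{n} \succ z_{0}$ for every $n$ while $d_{s}(w_{n}, x) \to 0$. The construction proceeds by iterating anonymity and semiconvexity: given an iterate $z_{k}$, one selects a finite transposition $\pi_{k}$ acting on coordinates where $z_{k}$ deviates from $x$, uses anonymity to obtain $z_{k} \sim \pi_{k}(z_{k})$, and applies semiconvexity to produce $z_{k+1} := s_{k} z_{k} + (1 - s_{k}) \pi_{k}(z_{k}) \succ z_{k}$. By transitivity $z_{k+1} \succ z_{0}$ for all $k$. By choosing the transpositions so that each mixing step redistributes the non-zero mass across further and more numerous coordinates, and by combining this with permutation-induced shifts to symmetrize, one forces the sup-norm perturbation of the iterates from $x$ to collapse; extracting a suitable subsequence delivers the required $(w_{n})$.

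Once $(w_{n})$ is in hand, the contradiction is immediate: by $\tau_{d_{s}}$-continuity the upper contour set $U^{\precsim}(z_{0}) = \{w \in l_{\infty}^{+} : z_{0} \precsim w\}$ is closed in $\tau_{d_{s}}$, so $w_{n} \to x$ forces $x \in U^{\precsim}(z_{0})$, i.e., $z_{0} \precsim x$, contradicting $z_{0} \succ x$. The main obstacle is that distributive fairness semiconvexity guarantees only the existence of some $s_{k} \in (0, 1)$ at each step, with no uniform control: an adversarial $s_{k}$ arbitrarily close to $0$ or $1$ could resist a naive iteration by preventing any genuine shrinkage of the largest coordinate. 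The construction must therefore be arranged so that the $d_{s}$-decay is forced regardless of the selected mixing parameters, leveraging the many symmetries afforded by anonymity to ``average away'' the uncontrolled parameters across carefully nested transpositions; making this combinatorial bookkeeping work is the essence of Sakai's original argument in \cite{sakaiSC}, with the positivity constraint of $l_{\infty}^{+}$ used throughout to keep all iterates inside the alternative set.
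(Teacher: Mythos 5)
You should first note that the paper does not actually prove Theorem \ref{Sakai}: it is quoted from Sakai \cite{sakaiSC}, and the only proof-level detail the paper supplies is the adaptation of \cite[Lemma 1]{sakaiSC} to the metric $d_1$, which works with streams built from ``staircase'' blocks $\frac{1}{2^n},\frac{2}{2^n},\dots,\frac{n}{2^n}$ and an integer $m(n)$ with $m(n)/2^n\leq s<(m(n)+1)/2^n$. Measured against that argument, your proposal has its logical template reversed, and the step you defer as ``combinatorial bookkeeping'' is not merely difficult --- it is provably impossible. Each coordinate of $z_{k+1}=s_k z_k+(1-s_k)\pi_k(z_k)$ is a convex combination of two coordinates of $z_k$, so $\sup_t (z_{k+1})_t\leq\sup_t(z_k)_t$; by induction every point $w$ reachable from $z_0=(s_0,1-s_0,0,0,\dots)$ by iterated mixing and finite permutations satisfies $\sup_t w_t\leq\max\{s_0,1-s_0\}<1$, hence $d_s(w,x)\geq 1-\max\{s_0,1-s_0\}>0$ for your $x=(1,0,0,\dots)$. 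No subsequence of such iterates can converge to $x$ in the supremum metric, so the sequence $(w_n)$ at the heart of your argument cannot be produced by the tools you invoke, no matter how the transpositions are nested or the parameters $s_k$ fall.

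The argument you are reaching for runs in the opposite direction. One fixes a single, deliberately rich stream $x$ (containing, for every $n$, infinitely many copies of a staircase block), lets distributive fairness semiconvexity produce $z=sx+(1-s)\pi(x)\succ x$ for some uncontrolled $s\in(0,1)$, and then approximates $z$ in the sup norm by \emph{finite permutations of $x$ itself}: because $x$ contains values within $1/n$ of any prescribed $s$, a finite permutation can place them in the finitely many coordinates where $z$ differs from $x$. By anonymity all these permutations are indifferent to $x$ and hence lie in the closed set $L^{\precsim}(x)$, so the limit satisfies $z\precsim x$, contradicting $z\succ x$; note this uses only closedness of lower contour sets, and the unknown $s$ is handled by the richness of $x$ rather than by ``averaging it away.'' Your choice $x=(1,0,0,\dots)$ is too poor to support any such approximation, and the points strictly above $z_0$ that the axioms let you exhibit all lie in a region that stays uniformly sup-far from $x$, so the proposal cannot be repaired without replacing both the test point and the direction of the limiting argument.
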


It is obvious that if there is no preorder on $l_{\infty}^{+}$ satisfying anonymity, distributive fairness semiconvexity and $\tau_{d_{s}}$-continuity, then there is no any preorder fulfilling  anonymity, strong distributive fairness semiconvexity and $\tau_{d_{s}}$-continuity.
 
\medskip

The next impossibility result can be also obtained.
 
 \begin{theorem}
 There is no any ethical social welfare (pre)order $\precsim$ on $l_{\infty}^{+}$ satisfying anonymity, distributive fairness semiconvexity and lower $\tau_{d_1}$-continuity with $d_1(x,y)=\min\{1,\sum_{i=1}^{\infty} |x_t-y_t|\}$ for all $x,y\in l_{\infty}$.
 \end{theorem}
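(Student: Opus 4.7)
The plan is to reduce the claim to Sakai's impossibility Theorem \ref{Sakai} via the characterization given by Theorem \ref{Cont2}, together with a local equivalence between $d_1$ and $d_s$ on finite-support slices. Suppose, for contradiction, that an ethical social welfare (pre)order $\precsim$ on $l_{\infty}^{+}$ satisfies anonymity, distributive fairness semiconvexity, and lower $\tau_{d_1}$-continuity. By Theorem \ref{Cont2}, this last hypothesis is equivalent to $\tau^{\precsim}_{U}\subseteq \tau_{d_1}$; equivalently, every lower contour set $L^{\precsim}(y)$ is $d_1$-closed in $l_{\infty}^{+}$.

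The key observation is that on each finite-support slice
\[
V_K := \{x\in l_{\infty}^{+}\colon x_i=0 \text{ for all } i>K\},
\]
the metrics $d_1$ and $d_s$ are Lipschitz equivalent, since $\sup_{i\le K}|x_i-y_i|\le \sum_{i\le K}|x_i-y_i|\le K\cdot \sup_{i\le K}|x_i-y_i|$, and the truncation at $1$ in $d_1$ only affects distances that are bounded away from $0$. Consequently, a sequence lying in $V_K$ converges in $d_1$ if and only if it converges in $d_s$, and a subset of $V_K$ is $d_1$-closed in $V_K$ if and only if it is $d_s$-closed in $V_K$.

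I would then inspect Sakai's proof of Theorem \ref{Sakai}. His contradiction is produced by iterating distributive fairness semiconvexity and anonymity starting from a vector of finite support; since finite permutations and convex combinations both preserve supports, all iterates stay within a fixed $V_K$. Moreover, the invocation of $\tau_{d_s}$-continuity at the limit step uses only the closedness of the pertinent lower contour set. The $V_K$-equivalence above then converts our $d_1$-closedness hypothesis into the $d_s$-closedness required by Sakai, so his argument transports verbatim and the same contradiction appears.

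The main obstacle is the careful verification of these two structural features of Sakai's original argument: (i) only the lower half of $\tau_{d_s}$-continuity is used at the limit step, and (ii) all sequences constructed remain inside a common $V_K$. Should either fail verbatim, one re-engineers the construction so that every permutation used acts on a fixed finite set of indices---keeping every iterate in some $V_K$---and invokes only the closedness of lower contour sets. Such an adaptation is expected to succeed, since distributive fairness semiconvexity and anonymity both operate through finite permutations and convex combinations, both of which preserve finite supports.
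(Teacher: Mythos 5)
There is a genuine gap, and it lies exactly in the structural assumption your reduction depends on: that all the streams produced in Sakai's argument remain in a common finite-support slice $V_K$. They do not, and they cannot. The contradiction in a Diamond/Sakai-type argument is obtained by chaining applications of anonymity and semiconvexity across \emph{unboundedly many} coordinates: each individual iterate has finite support, but the supports grow without bound, so there is no single $K$ for which your Lipschitz equivalence of $d_1$ and $d_s$ applies. This is not a repairable technicality. On a fixed $V_K$ one can exhibit anonymous, distributive-fairness-semiconvex, continuous orders (e.g.\ ranking by $\sum_{i\le K} g(x_i)$ with $g$ strictly concave and increasing), so no contradiction can ever be localized to a fixed finite-support slice; the impossibility is intrinsically an infinite-horizon phenomenon. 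Your fallback plan of re-engineering the construction so that every permutation acts on a fixed finite index set therefore fails for the same reason. Note also that $\tau_{d_s}\subseteq\tau_{d_1}$, so lower $\tau_{d_1}$-continuity is a \emph{weaker} hypothesis than lower $\tau_{d_s}$-continuity: the theorem is strictly stronger than Theorem \ref{Sakai}, and a reduction that transports Sakai's argument ``verbatim'' cannot be expected to close the gap without new quantitative input.

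The paper supplies precisely that input. It reruns Sakai's Lemma 1 but replaces the staircase block by $G(n)=\bigl(\tfrac{1}{2^n},\tfrac{2}{2^n},\ldots,\tfrac{n}{2^n}\bigr)$, choosing $n$ with $\tfrac{n}{2^n}<\min\{\epsilon,s\}$ and $m(n)$ with $m(n)/2^n\leq s<(m(n)+1)/2^n$. The point of this modification is that the $\ell^1$ mass of the block, $\tfrac{n(n+1)}{2^{n+1}}$, tends to $0$, so the relevant sequence of streams converges in $d_1$ (not merely in $d_s$, as in the original construction, where the block's $\ell^1$ mass stays bounded away from $0$). The very fact that the authors must shrink the block in this way is direct evidence that the construction does not live in any fixed $V_K$ — if it did, your metric-equivalence observation would make the modification unnecessary. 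To fix your proof you would need to carry out this (or an equivalent) re-scaling of the construction explicitly, which is a different argument from the reduction you propose.
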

 \begin{proof}
 The same argument to those given in \cite[Lemma 1]{sakaiSC} apply here, but now, defining $G(n)$ by the finite sequence $\frac{1}{2^n}, \frac{2}{2^n}, ...,\frac{n}{2^n}$. It is enough to choose $n\in \mathbb{N}$ such that $\frac{n}{2^n}<\min\{\epsilon, s\}$ and the integer $m(n)$ satisfying that $ m(n)/2^n\leq s<  (m(n)+1)/2^n$.  \end{proof}

\medskip

Later on Sakai introduced the sensitivity to the present axiom in order to capture in some sense anonymity and distributive fairness semiconvexity at the same time. Again an incompatibility was showed in such a way that the following impossibility result, which generalizes Diamond's and Sakai's impossibility theorems, was proved. 

\begin{theorem}\label{Sakai2}There is no any ethical social welfare (pre)order  $\precsim$ on $l_{\infty}^{[0,1]}$ which satisfies anonymity, sensitivity to the present and $\tau_{d_{s}}$-continuity. 
\end{theorem}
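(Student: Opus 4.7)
I argue by contradiction: assume $\precsim$ is an ethical social welfare (pre)order on $l_{\infty}^{[0,1]}$ which is simultaneously anonymous, sensitive to the present, and $\tau_{d_{s}}$-continuous. First, I invoke sensitivity to the present at a carefully chosen reference sequence $x\in l_{\infty}^{[0,1]}$ — the natural first try being $x=(0,0,\ldots)$, or a constant sequence $x=(c,c,\ldots)$ with $c\in(0,1)$ if the construction below requires slack on both sides — obtaining $y,z\in l_{\infty}^{[0,1]}$ and $t\in\mathbb{N}$ such that
$$a:=(z^{t},{}^{t+1}x)\prec(y^{t},{}^{t+1}x)=:b,$$
so that $a$ and $b$ agree past the $t$-th coordinate and differ on at least one coordinate $i\leq t$.

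The second step is to combine anonymity with $\tau_{d_{s}}$-continuity. Anonymity gives $\pi(a)\sim a$ and $\pi(b)\sim b$ for every $\pi\in\Pi_{\infty}$, while $\tau_{d_{s}}$-continuity guarantees that the indifference class of $b$, namely $U^{\precsim}(b)\cap L^{\precsim}(b)$, is $\tau_{d_{s}}$-closed as an intersection of two closed contour sets. The target is therefore a sequence $(b_{n})_{n\in\mathbb{N}}$ with $b_{n}=\pi_{n}(b)\sim b$ and $b_{n}\to a$ in $d_{s}$; by closedness this would force $a\in U^{\precsim}(b)\cap L^{\precsim}(b)$, i.e., $a\sim b$, contradicting $a\prec b$ and completing the argument.

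The technical heart and main obstacle is the construction of such a $d_{s}$-convergent sequence of anonymity-images of $b$. In contrast with the product topology, the supremum metric is rigid under finite permutation: permuting coordinates of $b$ preserves the full sup-amplitude of $b-a$, so one cannot merely push the distinctive first $t$ coordinates out to infinity and hope to converge in $d_{s}$. Following the blueprint of Lemma 1 in \cite{sakaiSC} — the very template used in the preceding unnumbered impossibility with the dyadic blocks $G(n)=\{1/2^{n},2/2^{n},\ldots,n/2^{n}\}$ — I would iterate sensitivity to the present at well-chosen intermediate reference points so as to manufacture a cascade of strict preferences whose amplitudes shrink geometrically, and then splice them together through finite permutations to produce a sequence that is Cauchy in $d_{s}$ with limit $a$. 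The delicate point is that sensitivity to the present is a purely qualitative existence statement with no built-in shrinkage parameter (unlike the semiconvexity parameter $s\in(0,1)$ exploited in the proof of Theorem \ref{Sakai}); the requisite amplitude control must therefore be engineered through the choice of reference $x$ at each iteration, and verifying both the Cauchy property in $d_{s}$ and that the limit equals $a$ is where the real work lies.
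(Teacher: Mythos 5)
The paper does not actually prove Theorem~\ref{Sakai2}: it is recalled from Sakai \cite{sakaiEB} and used as a black box, so there is no in-paper argument to compare yours against. Judged on its own merits, your proposal is a plan whose decisive step is missing, as you yourself concede. The opening moves (extract $a=(z^{t},{}^{t+1}x)\prec(y^{t},{}^{t+1}x)=b$ from sensitivity to the present, observe that $U^{\precsim}(b)\cap L^{\precsim}(b)$ is $\tau_{d_{s}}$-closed, and aim for a sequence $\pi_{n}(b)\to a$ in $d_{s}$) are fine as far as they go, but the target sequence need not exist at all: a finite permutation only rearranges the coordinates of $b$, so the multiset of its values is unchanged. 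If, say, $x=\mathbf{0}$ and $\max_{i\le t}y_{i}>\max_{i\le t}z_{i}$, then every $\pi(b)$ has some coordinate equal to $\max_{i\le t}y_{i}$ while every coordinate of $a$ is at most $\max_{i\le t}z_{i}$, whence $d_{s}(\pi(b),a)\ge \max_{i\le t}y_{i}-\max_{i\le t}z_{i}>0$ uniformly over $\pi\in\Pi_{\infty}$. The Diamond-style device of pushing the distinctive block out to infinity is therefore structurally unavailable in the supremum metric, exactly as you suspect, and nothing in the axiom lets you rule out the bad configuration, since you do not get to choose $y$, $z$ or $t$.

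The fallback you sketch --- iterating sensitivity to the present at intermediate reference streams to manufacture strict preferences of geometrically shrinking amplitude and splicing them by permutations --- is precisely the content of the theorem, and it is not supplied. Sensitivity to the present is a bare existential statement: for each reference $x$ it yields \emph{some} $y,z,t$ with no control on $t$, on $\sup_{i\le t}|y_{i}-z_{i}|$, or on how the new comparison relates to the previous one, so the axiom contains no mechanism producing the shrinkage your Cauchy argument needs, and no concrete choice of reference streams is exhibited. Until that construction is written down and both the Cauchy property and the identification of the limit with $a$ are verified, the argument identifies the difficulty rather than resolving it; for a complete proof you should consult Sakai's own argument in \cite{sakaiEB}, on which the present paper relies.
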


Of course from the preceding results in which the sets $l_{\infty}^{+}$ and $l_{\infty}^{[0,1]}$ have been fixed the alternative set, one can infer the same impossibilities results considering $l_{\infty}$ as the alternative set.

Keeping in mind the characterizations disclosed in Subsection \ref{NewCharact} the following possibilities results can be stated. Recall that $X$ is any subset of $l_{\infty}$.

\begin{theorem}\label{DiamondNew}Let $\tau$ be any topology on $X$. Then the following assertions hold: 

\begin{enumerate}
\item[(1)] There exists an ethical social welfare order  $\precsim$ on $X$ which satisfies anonymity, strong monotonicity and $\tau$-continuity provided that $\tau$ is finer than the coarsest topology including $\tau^{\precsim}_{U}$ and $\tau^{\precsim}_{L}$. 

\item[(2)] There exists an ethical social welfare order $\precsim$ on $X$ which satisfies anonymity, strong monotonicity and {lower} $\tau$-continuity provided that $\tau$ is finer than $\tau^{\precsim}_{U}$.
\end{enumerate}
\end{theorem}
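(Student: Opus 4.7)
The plan is to combine an existence result from the literature with the characterization of $\tau$-continuity and lower $\tau$-continuity established earlier in the section. In fact, once existence of a suitable preorder is granted, both assertions are almost immediate corollaries of Theorems \ref{Cont1} and \ref{Cont2}, respectively.

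First I would invoke Svensson's construction in \cite{svensson}, which produces an ethical social welfare order $\precsim$ on $l_{\infty}^{[0,1]}$ (and, after routine adjustments of scale, on $l_{\infty}^{+}$ or on an arbitrary $X\subseteq l_{\infty}$ closed under finite permutations) that satisfies both the anonymity axiom and the strong monotonicity (strong Pareto) axiom. The point is that the existence of such an order does not involve any topology; it is a purely order-theoretic construction. Thus for any alternative set $X\subseteq l_{\infty}$ of the standard form considered in Section \ref{prelim} we have at our disposal a fixed preorder $\precsim$ on $X$ enjoying the two required ethical requirements.

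Next, with this $\precsim$ now in hand, form the induced upper topology $\tau^{\precsim}_{U}$ and the induced lower topology $\tau^{\precsim}_{L}$. For assertion (1), the hypothesis is that $\tau$ refines the coarsest topology containing $\tau^{\precsim}_{U}\cup\tau^{\precsim}_{L}$. Under this hypothesis, Theorem \ref{Cont1} applied to the very $\precsim$ produced by Svensson immediately yields that $\precsim$ is $\tau$-continuous; combined with anonymity and strong monotonicity, which are inherent to $\precsim$, this gives the desired ethical social welfare order. For assertion (2), the same $\precsim$ works, but now the hypothesis is only that $\tau$ refines $\tau^{\precsim}_{U}$, and one appeals to Theorem \ref{Cont2} instead of Theorem \ref{Cont1} to conclude lower $\tau$-continuity.

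The only genuine subtlety, and the point I would be most careful about, is verifying that Svensson's construction actually adapts to every admissible alternative set $X\subseteq l_{\infty}$ (one must ensure $X$ is closed under finite permutations so that the anonymity axiom is well-posed, and closed enough to allow the two-point comparison required by strong Pareto). For the standard choices $X\in\{l_{\infty},\, l_{\infty}^{+},\, l_{\infty}^{[0,1]}\}$ this is routine; the rest of the argument is then nothing more than a direct application of the characterizations obtained in Subsection \ref{NewCharact}.
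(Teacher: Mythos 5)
Your proposal is correct and follows essentially the same route as the paper: cite a known construction of a complete preorder satisfying anonymity and strong Pareto, then observe that (lower) $\tau$-continuity is immediate from Theorems \ref{Cont1} and \ref{Cont2} since the hypothesis on $\tau$ is exactly the characterizing condition. The only cosmetic difference is that the paper instantiates the order with the overtaking-type criterion of \cite{Sakai06} on all of $l_{\infty}$ (which it then reuses for the distributive-fairness result in Theorem \ref{SakaiNew}), rather than Svensson's extension, and sidesteps your worry about adapting the construction to arbitrary $X$ by building the order on $l_{\infty}$ and restricting.
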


\begin{proof}$(1)$. According to \cite{Sakai06}, the following type of overtaking criterion $\precsim$ can be extended to an ethical social welfare order on $l_{\infty}$ and, thus, on $X\subseteq l_{\infty}$ which satisfies anonymity and strong monotonicity: 

$$y\precsim x \Leftrightarrow \text{ there is } t_0\in \mathbb{N} \text{ such that } \sum_{i=1}^{t} \left (g(x_t)-g(y_t) \right)\geq 0 \text{ for all } t\geq t_0,$$ where $g:\mathbb{R}\rightarrow \mathbb{R}^+$ is any strictly concave and strictly isotonic function.  Theorem \ref{Cont1} gives the $\tau$-continuity of such an extension. 
\medskip

$(2)$. Theorem \ref{Cont2} gives the lower $\tau$-continuity of the ethical social welfare order given in the proof of the assertion $(1)$.
\end{proof}
\medskip

Since strong monotonicity implies weak monotonicity, Theorem \ref{DiamondNew} gives as a consequence the existence of ethical social welfare order satisfying anonymity, weak monotonicity and (lower) $\tau$-continuity provided that $\tau$ is finer than the coarsest topology including $\tau^{\precsim}_{U}$ and $\tau^{\precsim}_{L}$ ($\tau^{\precsim}_{U}\subseteq \tau$).

\begin{theorem}\label{SakaiNew}Let $\tau$ be any topology on $X$. Then the following assertions hold: 

\begin{enumerate}
\item[(1)] There exists an ethical social welfare order  $\precsim$ on $X$ which satisfies anonymity, distributive fairness semiconvexity and $\tau$-continuity provided that $\tau$ is finer than the coarsest topology including $\tau^{\precsim}_{U}$ and $\tau^{\precsim}_{L}$. 

\item[(2)] There exists an ethical social welfare order $\precsim$ on $X$ which satisfies anonymity, distributive fairness semiconvexity and {lower} $\tau$-continuity provided that $\tau$ is finer than $\tau^{\precsim}_{U}$.
\end{enumerate}
\end{theorem}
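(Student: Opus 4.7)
The plan is to mimic the structure of the proof of Theorem \ref{DiamondNew}: first produce, independently of any topology, an ethical social welfare order on $X$ that satisfies anonymity and distributive fairness semiconvexity, and then use the characterizations of continuity from Theorems \ref{Cont1} and \ref{Cont2} to deduce the two continuity statements.

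For the existence ingredient I would appeal to Sakai's construction in \cite{sakaiSC}, which, as recalled in the introduction, provides a distributive fairness counterpart of Svensson's possibility result: a complete preorder on $l_{\infty}$ (in particular on $l_{\infty}^{+}$) which fulfils anonymity together with distributive fairness semiconvexity. Since $X\subseteq l_{\infty}$, one can just restrict that preorder to $X$; both axioms are preserved under restriction because they are universally quantified over elements of the underlying set and over finite permutations, and the operations $x\mapsto\pi(x)$ and $x\mapsto sx+(1-s)\pi(x)$ involved in the axioms stay inside $l_\infty$ (and, for the standard choices $X=l_\infty^+$ or $X=l_\infty^{[0,1]}$, inside $X$). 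The only delicate point is therefore guaranteeing that the image of these operations actually stays in $X$ when $X$ is arbitrary; this can be handled either by assuming that $X$ is closed under finite permutations and convex combinations (as is the case for the alternative sets considered in the paper) or by first extending the preference to $l_\infty$ via Sakai and then restricting.

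Once the preorder $\precsim$ is fixed, both statements become immediate applications of the characterizations already proved. For assertion $(1)$, the hypothesis says that $\tau$ is finer than the coarsest topology containing $\tau^{\precsim}_U$ and $\tau^{\precsim}_L$, which is precisely condition $(2)$ of Theorem \ref{Cont1}; hence $\precsim$ is $\tau$-continuous. For assertion $(2)$, the hypothesis says that $\tau$ is finer than $\tau^{\precsim}_U$, which is condition $(2)$ of Theorem \ref{Cont2}, so $\precsim$ is lower $\tau$-continuous. In both cases the anonymity and distributive fairness semiconvexity axioms are carried over from the constructed preorder and do not interact with the topological hypothesis.

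The main obstacle I anticipate is essentially bookkeeping rather than mathematics: making sure Sakai's construction is available on the given set $X$ with the appropriate closure properties, and making explicit that $\tau^{\precsim}_U$ and $\tau^{\precsim}_L$ are the topologies induced by the specific $\precsim$ just produced (so the statement is really ``there exist $\precsim$ and $\tau$ refining the corresponding upper/lower topologies'', as in Theorem \ref{DiamondNew}). No new analytic step beyond Sakai's existence lemma and the two characterization theorems is required.
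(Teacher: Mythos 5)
Your proposal is correct and takes essentially the same route as the paper: the authors also obtain the existence ingredient from Sakai's work in \cite{sakaiSC} (specifically, they reuse the overtaking criterion from the proof of Theorem \ref{DiamondNew}, whose distributive fairness semiconvexity is \cite[Theorem 1]{sakaiSC}, and restrict it to $X$), and then conclude both assertions by direct application of Theorems \ref{Cont1} and \ref{Cont2}. Your caveat about $X$ needing to be closed under finite permutations and convex combinations is a legitimate point that the paper passes over silently when it claims the axiom transfers to ``any $X\subseteq l_{\infty}$''.
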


\begin{proof}$(1)$. Consider again the ethical social welfare order introduced in the proof of Theorem \ref{DiamondNew} which satisfies anonymity.  In \cite[Theorem 1]{sakaiSC}, it has been proved that such an ethical social welfare order fulfills distributive fairness semiconvexity on $l_{\infty}^{+}$. Following the same arguments to those given in the aforesaid reference one can proved the distributive fairness semiconvexity on $l_{\infty}$ and, hence, on any $X\subseteq l_{\infty}$. Again by Theorem \ref{Cont1} we have the $\tau$-continuity. 
\medskip

$(2)$. Theorem \ref{Cont2} gives the lower $\tau$-continuity of the ethical social welfare order given in the proof of the assertion $(1)$.

\end{proof}

\begin{theorem}\label{SakaiNew2}Let $\tau$ be any topology on $X$. Then the following assertions hold: 

\begin{enumerate}
\item[(1)] There exists an ethical social welfare order  $\precsim$ on $X$ which satisfies anonymity, sensitivity to the present and $\tau$-continuity provided that $\tau$ is finer than the coarsest topology including $\tau^{\precsim}_{U}$ and $\tau^{\precsim}_{L}$. 

\item[(2)] There exists an ethical social welfare order $\precsim$ on $X$ which satisfies anonymity, sensitivity to the present and {lower} $\tau$-continuity provided that $\tau$ is finer than $\tau^{\precsim}_{U}$.
\end{enumerate}
\end{theorem}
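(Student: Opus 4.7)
The plan is to reuse the overtaking criterion $\precsim$ already constructed in the proof of Theorem \ref{DiamondNew}, namely
$$y\precsim x \Leftrightarrow \text{ there is } t_0\in \mathbb{N} \text{ such that } \sum_{i=1}^{t} \bigl(g(x_i)-g(y_i)\bigr)\geq 0 \text{ for all } t\geq t_0,$$
where $g\colon\mathbb{R}\to\mathbb{R}^+$ is strictly concave and strictly isotonic. From the proof of Theorem \ref{DiamondNew} this preorder extends to an ethical social welfare order on $l_\infty$, hence on any $X\subseteq l_\infty$, and already satisfies anonymity. Thus only two things remain: verify \emph{sensitivity to the present} and then invoke Theorems \ref{Cont1} and \ref{Cont2} for the topological conclusions, exactly as in the proofs of Theorems \ref{DiamondNew} and \ref{SakaiNew}.

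For sensitivity to the present, I would fix an arbitrary $x\in X$ and produce $y,z\in X$ and $t\in\mathbb{N}$ with $(z^{t},{}^{t+1}x)\prec(y^{t},{}^{t+1}x)$. The idea is to pick the simplest possible perturbation of the first coordinate: choose $t=1$ and take $y,z\in X$ with $y_1>z_1$ (for the usual alternative sets $l_\infty$, $l_\infty^+$, $l_\infty^{[0,1]}$ such a pair clearly exists). Since the compound sequences $(z^{t},{}^{t+1}x)$ and $(y^{t},{}^{t+1}x)$ agree from coordinate $t+1$ onward, for every $\tau\geq t$ the partial sum collapses to
$$\sum_{i=1}^{\tau}\bigl(g\bigl((y^{t},{}^{t+1}x)_i\bigr)-g\bigl((z^{t},{}^{t+1}x)_i\bigr)\bigr)=\sum_{i=1}^{t}\bigl(g(y_i)-g(z_i)\bigr),$$
which is strictly positive by strict isotonicity of $g$. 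This immediately yields $(z^{t},{}^{t+1}x)\precsim(y^{t},{}^{t+1}x)$, and the reverse partial sums are strictly negative for all $\tau\geq t$, ruling out $(y^{t},{}^{t+1}x)\precsim(z^{t},{}^{t+1}x)$; hence the strict inequality $(z^{t},{}^{t+1}x)\prec(y^{t},{}^{t+1}x)$ holds.

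Once sensitivity to the present is established, part $(1)$ follows from Theorem \ref{Cont1} and part $(2)$ from Theorem \ref{Cont2}, by exactly the same reduction used in the previous two possibility theorems.

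I expect no substantive obstacle: anonymity is imported from Theorem \ref{DiamondNew}, the continuity clauses are a direct application of the characterizations in Subsection \ref{NewCharact}, and the verification of sensitivity to the present reduces to a one-line telescoping argument that exploits the strict isotonicity of $g$. The only mildly delicate point is ensuring the witnesses $y,z$ actually lie in $X$; this is automatic for the standard alternative sets considered throughout the paper and was implicit in the preceding two proofs, so no extra hypothesis beyond those carried over from Theorem \ref{DiamondNew} is needed.
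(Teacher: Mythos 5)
Your proposal is correct and follows essentially the same route as the paper: reuse the overtaking-type order from the proof of Theorem \ref{DiamondNew} and conclude via Theorems \ref{Cont1} and \ref{Cont2}. The only difference is that the paper dispatches sensitivity to the present in one line by observing that strong monotonicity already implies it (take $z=x$, $t=1$ and any $y$ with $y_1>x_1$), whereas you re-verify it by a direct partial-sum computation on the pre-extension criterion --- a valid but unnecessary detour that also quietly relies on the extension preserving strict preferences, which does hold by definition of an extension.
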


\begin{proof}Since strong monotonicity implies sensitivity to the present of any preorder defined on $l_{\infty}$, the ethical social welfare order provided in the proof of Theorem \ref{DiamondNew} satisfies all the requirements demanded in assertion $(1)$ and $(2)$.
\end{proof}
\medskip

Banerjee and Mitra addressed the problem of identifying those topologies that make an ethical social welfare order continuous when anonymity and strong monotonicity are assumed (\cite{mitra}). They provided a necessary condition which is expressed in terms of a simplex condition that must be satisfied by the metric inducing the topology.
To this end, they consider a class $\triangle$ of metrics which satisfy four properties that we will not expose here because they are not relevant for our purpose. For a depper discussion of such properties we refer the reader to \cite{mitra}. 

Although the considered class imposes constraints about the metrics, the most usual metrics applied to the intergenerational distribution problem belong to  $\triangle$. Concretely the following celebrated metrics on $l_{\infty}$ are in the aforementioned class: $d_c,d_s,d_p,d_1,d_q$, where

$$\begin{array}{ll}
d_c(x,y)=\sum_{i=1}^{\infty} \frac{|x_{t}-y_{t}|}{2^{i}} &.\\
\\
d_p(x,y)=\min\{1,(\sum_{i=1}^{\infty} \left |x_t-y_t|^{p}\right )^{\frac{1}{p}}\}  \text{ with } p\in ]1,\infty[.& \\
\\
d_q(x,y)=\min\{1,\sum_{i=1}^{\infty} \left (|x_t-y_t|^{q}\right )\} \text{ with } q\in ]0,1[.& \\
\end{array}
$$

Notice that $\tau_{d_{c}}\subseteq \tau_{d_{s}}\subseteq \tau_{d_{p}}\subseteq \tau_{d_{1}} \subseteq \tau_{d_{q}}$. 
\medskip

The Banerjee and Mitra result can be stated as follows:

\begin{proposition}\label{Simplex}Let $d\in \triangle$ and let $\precsim$ be an ethical social welfare preorder on $l_{\infty}^{[0,1]}$ which satisfies anonymity and strong monotonicity. If $\precsim$ is lower $\tau_d$-continuous, then the metric $d$ satisfies that $d(\mathbf{0},S)>0$ with $d(x,S)=\inf_{y\in S}d(x,y)$, $S=\{x\in X: \sum_{t=1}^{\infty}x_t=1\}$ and $\mathbf{0}=(0,0,\ldots,0,\ldots)$. 
\end{proposition}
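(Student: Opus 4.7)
My plan is to prove the contrapositive: assuming $d(\mathbf{0},S)=0$, I will show that no ethical social welfare preorder $\precsim$ on $l_{\infty}^{[0,1]}$ satisfying anonymity and strong monotonicity can be lower $\tau_d$-continuous. From $d(\mathbf{0},S)=0$ I first extract a sequence $\{y^n\}_{n\in\mathbb{N}}\subseteq S$ with $d(\mathbf{0},y^n)\to 0$. Every $y^n$ satisfies $y^n_t\in [0,1]$, $\sum_t y^n_t=1$, and $y^n\neq \mathbf{0}$, so by strong monotonicity $\mathbf{0}\prec y^n$.

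Next I set up a Diamond-style perturbation around a fixed reference element $\overline{x}\in l_{\infty}^{[0,1]}$ with $\mathbf{0}\prec \overline{x}$ (the natural candidate is $\overline{x}=(1,0,0,\ldots)$, which itself lies in $S$). The core of the argument is to use the four defining axioms of the class $\triangle$ (see \cite{mitra})---in particular, invariance of $d$ under finite permutations and coordinate-wise monotonicity---together with anonymity, to construct an auxiliary sequence $\{x^n\}_n$ built from permuted copies of $\overline{x}$ combined with the packets $y^n$ as additive perturbations pushed far out in the sequence. The goal of this construction is that, on the one hand, anonymity and strong monotonicity force $x^n\precsim \overline{x}$ for every $n$ (so $x^n\in L^{\precsim}(\overline{x})$), while on the other hand the metric-theoretic properties of $d\in\triangle$ force $x^n\to \overline{x}'$ in $\tau_d$ for some $\overline{x}'$ that is Pareto-strictly above $\overline{x}$, hence satisfies $\overline{x}\prec \overline{x}'$ by strong monotonicity.

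Once such an $\{x^n\}$ is in hand, the argument closes immediately: lower $\tau_d$-continuity makes $L^{\precsim}(\overline{x})$ a $\tau_d$-closed set, so the $\tau_d$-limit $\overline{x}'$ of $\{x^n\}\subseteq L^{\precsim}(\overline{x})$ must belong to $L^{\precsim}(\overline{x})$, i.e., $\overline{x}'\precsim \overline{x}$, contradicting $\overline{x}\prec \overline{x}'$.

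The main obstacle is the auxiliary construction in the middle step: one must exploit the axioms of $\triangle$ with care, because the hypothesis $d(\mathbf{0},S)=0$ is precisely what permits ``unit-mass'' packets like $y^n$ to be shrunk to $\mathbf{0}$ in $\tau_d$ while remaining nontrivial in preference through anonymity. The simplex condition $d(\mathbf{0},S)>0$ is the quantitative threshold that blocks this construction, and my plan is to read it off as the negation of the metric identities that make the perturbations $x^n$ converge to $\overline{x}'$.
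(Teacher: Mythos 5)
There is no proof to compare against here: the paper states Proposition~\ref{Simplex} as an imported result of Banerjee and Mitra from \cite{mitra}, gives no argument for it, and explicitly declines even to list the four defining properties of the class $\triangle$. So your proposal has to stand on its own, and as written it does not: the entire mathematical content of the proposition is concentrated in the ``auxiliary construction in the middle step,'' and you never carry it out. Announcing that the axioms of $\triangle$ ``force'' $x^n\precsim\overline{x}$ while ``forcing'' $x^n\to\overline{x}'$ with $\overline{x}\prec\overline{x}'$ is a description of the goal, not an argument; you acknowledge this yourself. The outer skeleton is fine and is exactly the Diamond-type scheme the paper uses elsewhere (compare Proposition~\ref{Lsupnorm} and Example~\ref{Eneg}, where a sequence inside a lower contour set $L^{\precsim}(x)$ converges in the metric topology to a point strictly above $x$), and the final step---lower $\tau_d$-continuity makes $L^{\precsim}(\overline{x})$ closed, so the limit would have to lie in it---is correct. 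But without an explicit $x^n$ and a verification of both required properties, there is no proof.

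Beyond incompleteness, the construction as you describe it points the wrong way. You propose to build $x^n$ from ``permuted copies of $\overline{x}$ combined with the packets $y^n$ as additive perturbations.'' If $x^n$ is obtained from (a permutation of) $\overline{x}$ by adding the nonnegative, nonzero packet $y^n$, then anonymity plus strong monotonicity give $\overline{x}\sim\pi(\overline{x})\prec x^n$, i.e.\ $x^n\notin L^{\precsim}(\overline{x})$---the opposite of what you need. To sit inside $L^{\precsim}(\overline{x})$ by appeal only to anonymity and strong Pareto, each $x^n$ must be coordinatewise \emph{dominated by} a finite permutation of $\overline{x}$ (so that $x^n\precsim_m\overline{x}$ under the grading principle, hence $x^n\precsim\overline{x}$), while the $\tau_d$-limit must coordinatewise dominate $\overline{x}$ strictly. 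Arranging both simultaneously, using $d(\mathbf{0},S)=0$ to make the ``missing mass'' metrically invisible, is precisely the delicate part, and it is also where the unstated axioms of $\triangle$ (permutation invariance and coordinatewise monotonicity of $d$, at least) must enter; none of that is done in your proposal.
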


From Proposition \ref{Simplex}, Banerjee and Mitra deduced that there is no ethical welfare order satisfying anonymity, strong monotonicity and, in addition, lower $\tau_{d_{c}}$-continuity, $\tau_{d_{s}}$-continuity and $\tau_{d_{p}}$-continuity. Notice that the metrics $d_c,d_s,d_p$ do not hold the simplex condition ``$d(\mathbf{0},S)>0$''.
\medskip

Note that we can restate Proposition \ref{Simplex} interchanging in its statement the lower $\tau_d$-continuity of $\precsim$ by the fact that $\tau$ is finer than $\tau^{\precsim}_{U}$. So if a metric belonging to $\triangle$ violates the simplex condition, then $\tau^{\precsim}_{U}\not\subseteq \tau_d$ necessarily.

\medskip

It must be stressed that our approach  presents two advantages with respect to the approach given by Banerjee and Mitra. On the one hand, it allows us to decide the continuity of the ethical welfare order even if the topology under consideration is not metrizable and the alternative space is $l_{\infty}$ instead of $l_{\infty}^{[0,1]}$. Observe that a few properties that a metric in the class $\triangle$ must satisfied are not true when the intergenerational distributions are not in $l_{\infty}^{+}$. Moreover, every ethical social welfare order $\precsim$ will be continuous with respect to the topology $\tau_d$ induced by a metric (belonging to $\triangle$ or not) on $l_{\infty}$ if and only if $\tau^{\precsim}_{U}\subseteq \tau_d$. On the other hand, contrary to Theorems \ref{Cont1} and \ref{Cont2},  Banerjee and Mitra only provide a necessary condition and  they do not prove the converse of Proposition \ref{Simplex}. Instead, they provide an example of ethical social welfare orders on $l_{\infty}^{[0,1]}$ which is (lower) $\tau_{d_{1}}$-continuous (which satisfies the simplex condition). The aforesaid example is given by the extension of the overtaking type criterion due to Svensson (\cite{svensson}). 
%OSCAR: ESTO LO HE ADELANTADO, PUESTO QUE LO NECESITAMOS PARA EL EJEMPLO.

%{\color{red}
%As mentioned before,
 Svensson proved that every preorder that refines the {\em grading principle} can be extended in such a way that the extension fulfills anonymity and strong monotonicity in \cite{svensson}. The aforementioned grading principle is the preorder $\precsim_{m}$ defined on $l_{\infty}$ as follows:

$$ x\precsim_m y \iff x\leq \pi(y) \, \text{ for some } \pi\in \Pi_{\infty}.$$
%}
However, Example \ref{Eneg} shows that the converse of Proposition \ref{Simplex} does not hold in general.

\begin{example}\label{Eneg} \rm
Let $\precsim^{\frac{1}{2}}$ be the preorder on $l_{\infty}$ defined by 
$$x\precsim^{\frac{1}{2}} y \iff \left\{  \begin{array}{c}
x\precsim_m y, \\
\text{  or  }\\
 \sigma(x)> \sigma(y),
\end{array}\right.,$$ where $\sigma(x)$ denotes the number of coordenates of $x$ which are lower than $\frac{1}{2}$.  Notice that the preorder $\precsim^{\frac{1}{2}}$ is related to the satisfaction of basic needs criterion introduced by G. Chichilnisky in \cite{Chichilnisky1977} (see, also, \cite{Chichilnisky1996}).

Clearly $\precsim^{\frac{1}{2}}$ refines the preorder $\precsim_m$. It is not hard to check that $\precsim^{\frac{1}{2}}$ satisfies anonymity and strong monotonocity (see Proposition \ref{Pminpreorder} below).  By \cite{svensson}, $\precsim^{\frac{1}{2}}$ can be extended in such a way that the extension fulfills anonymity and strong monotonicity (see the paragraph before Proposition \ref{Pminpreorder}). Set $\preceq$ the ethical social welfare order on $l_{\infty}$ that extends $\precsim^{\frac{1}{2}}$.

Now, we define the sequence $(x_n)_{n\in \mathbb{N}}$ in $l_{\infty}$ by 
$$x_n=(\frac{1}{2}-\frac{1}{2^n},\frac{1}{2}-\frac{1}{2^n},0,0,\ldots, 0,\ldots)$$ for each $n\in \mathbb{N}$.  It is clear that  $(x_n)_{n\in \mathbb{N}}$ converges to %$\mathbf{\frac{1}{2}}$ with $\mathbf{\frac{1}{2}}
$Z=(\frac{1}{2},\frac{1}{2},0,0,\ldots)$ with respect to the topology $\tau_{d_q}$, since we have that  $d_p(l,y_n)=\frac{2^{\frac{1}{p}}}{2^n}$ for all $n\in \mathbb{N}$.

Set $y=(\frac{1}{2},0,...,0,...)$. It is clear that $x_n\prec^{\frac{1}{2}} y$ and, thus, $x_n\prec y$. Moreover, $y\prec^{\frac{1}{2}} \mathbf{\frac{1}{2}}$ and, thus, $y\prec\mathbf{\frac{1}{2}}$. Whence, $\mathbf{\frac{1}{2}}\in X\setminus L_{\preceq}(y)$ whereas $x_n\notin X\setminus L_{\preceq}(y)$. Therefore $(x_n)_{n\in \mathbb{N}}$ fails to converge with respect to $\tau_U^{\preceq}$. It follows that $\tau_{d_q}$ is not finer than $\tau_U^{\preceq}$. Whence we conclude that the preorder $\preceq$ is not lower $\tau_{d_q}$-continuous.
\end{example}

In the light of the preceding facts, although, as mentioned above, Theorems \ref{Cont1} and \ref{Cont2} characterize the topologies for which an ethical social welfare order is continuous, next we explore the possibility of giving a method, based on Corollary \ref{Alex2}, that warranties the continuity of any extension of an ethical social welfare preorder satisfying anonymity and strong monotonicity on $l_{\infty}$ (not only on $l_{\infty}^{[0,1]}$). The possibility of extending ethical social welfare preorder in such a way the every extension preserves the continuity has attracted the attention of several authors  {(see \cite{Herd,Mash}, among others)}.

% OSCAR: ES SIGUIENTE PARRAFO LO HE ADELANTADO Y LO HE PUESTO ANTES DEL EJEMPLO, PARA PODER USAR ASI EL GRADING EN EL EJEMPLO, QUE NO SE HABIA INTRODUCIDO TODAVIA
%
%As mentioned before, Svensson proved that every preorder that refines the {\em grading principle} can be extended in such a way that the extension fulfills anonymity and strong monotonicity in \cite{svensson}. The aforementioned grading principle is the preorder $\precsim_{m}$ defined on $l_{\infty}$ as follows:

%$$ x\precsim_m y \iff x\leq \pi(y) \, \text{ for some } \pi\in \Pi_{\infty}.$$

Next we go one step further than Svensson and we show that every ethical social welfare order on  $l_{\infty}$ satisfying anonymity and strong monotonicity is continuous with respect to every topology finer than the Alexandroff topology induced by the grading principle.

Before stating the announced property, we point out, on account of \cite[Proposition 1]{AsheimBuch01}, that every ethical social welfare order that satisfies anonymity and strong monotonicity refines the grading principle.

\begin{proposition}\label{Pminpreorder}
The relation $\precsim_{m}$ is the smallest ethical social welfare preorder defined on $l_{\infty}$ satisfying anonymity and strong monotonicity, where $\precsim_{m}$ is defined as follows:

$$ x\precsim_m y \iff x\leq \pi(y) \, \text{ for some } \pi\in \Pi_{\infty}.$$
\end{proposition}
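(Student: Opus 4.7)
The plan is to verify in turn that $\precsim_{m}$ is a preorder on $l_\infty$, that it satisfies anonymity, that it satisfies strong monotonicity, and finally that it is contained in every ethical social welfare preorder on $l_\infty$ fulfilling anonymity and strong monotonicity.

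For the preorder properties, reflexivity follows at once by choosing $\pi$ to be the identity in $\Pi_\infty$. Transitivity is settled by noting that if $x\leq \pi_1(y)$ and $y\leq \pi_2(z)$ with $\pi_1,\pi_2\in\Pi_\infty$, then applying $\pi_1$ componentwise to $y\leq \pi_2(z)$ preserves the inequality, so $\pi_1(y)\leq \pi_1(\pi_2(z))$, while the composition of two finite permutations remains in $\Pi_\infty$. For anonymity, given $x\in l_\infty$ and $\pi\in\Pi_\infty$, the inequality $x\leq \pi(\pi^{-1}(x))$ together with $\pi^{-1}\in\Pi_\infty$ yields $x\precsim_m \pi(x)$, and symmetrically $\pi(x)\leq \pi(x)$ with the choice $\sigma=\pi$ gives $\pi(x)\precsim_m x$.

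For strong monotonicity, suppose $x\leq y$ componentwise with $x\neq y$. The identity permutation immediately produces $x\precsim_m y$, so the substance lies in showing $y\not\precsim_m x$. Assuming the contrary, pick $\pi\in\Pi_\infty$ with fixing index $t_0$ such that $y\leq \pi(x)$. For any index $t>t_0$, the chain $x_t\leq y_t\leq \pi(x)_t=x_t$ forces $y_t=x_t$. Fixing any $N\geq t_0$ and summing the two inequalities $x\leq y$ and $y\leq \pi(x)$ over $t=1,\dots,N$, one uses that $\pi$ restricts to a permutation of $\{1,\dots,N\}$ (because it fixes every index beyond $t_0\leq N$) to obtain $\sum_{t=1}^N x_t\leq \sum_{t=1}^N y_t\leq \sum_{t=1}^N \pi(x)_t=\sum_{t=1}^N x_t$. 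Since every $y_t-x_t\geq 0$, each term must vanish, so $y_t=x_t$ for $t\leq N$ as well, giving $x=y$ and contradicting the hypothesis. This finite-window sum comparison across a permutation is the technical crux of the proof and is the step I expect to require the most care.

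Minimality is then straightforward. Let $\precsim'$ be any ethical social welfare preorder on $l_\infty$ satisfying anonymity and strong monotonicity, and assume $x\precsim_m y$, witnessed by $x\leq \pi(y)$ for some $\pi\in\Pi_\infty$. Anonymity of $\precsim'$ gives $y\sim'\pi(y)$. If $x=\pi(y)$, then $x\sim'\pi(y)\sim' y$; otherwise strong monotonicity of $\precsim'$ yields $x\prec'\pi(y)$, and in both cases transitivity delivers $x\precsim' y$, so $\precsim_m\subseteq\precsim'$, which completes the plan.
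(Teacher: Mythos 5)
Your proof is correct and follows essentially the same decomposition as the paper's: verify the preorder axioms, anonymity, and strong monotonicity of $\precsim_m$, then use anonymity plus strong Pareto of an arbitrary such preorder $\precsim'$ to conclude $x\precsim'\pi(y)\precsim' y$ and hence minimality. The one place you go beyond the paper is the finite-window partial-sum argument ruling out $y\precsim_m x$ when $x\leq y$ and $x\neq y$: the paper merely asserts that no reverse permutation inequality can hold, so your version supplies a justification that the paper omits.
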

\medskip

The following interesting property was proved in \cite[Lemma 1]{BEstRP19} and it will be crucial in order to guarantee the continuity of any extension of the grading principle $\precsim_m$.

\begin{lemma}\label{Lgiltza}
Let $\sqsubseteq$ and $\precsim$ two preorders on a nonempty set $Y$ and $\tau_A^\sqsubseteq$ and $\tau_A^\precsim$ their corresponding Alexandroff topologies. Then the following assertions are equivalent: 
\begin{enumerate}
\item $\sqsubseteq \subseteq \precsim$ ($\precsim$ refines $\sqsubseteq$).

\item  $\tau_A^\precsim\subseteq \tau_A^\sqsubseteq$.
\end{enumerate}
\end{lemma}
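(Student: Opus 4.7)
The plan is to prove the two implications directly using only the definition of the Alexandroff topology (open sets = up-sets) and the reflexivity of each preorder. No machinery beyond the earlier preliminaries is needed; the content of the lemma is really a book-keeping statement about how up-sets behave when one preorder is contained in another.

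For $(1)\Rightarrow (2)$, I would fix an arbitrary $U\in \tau_{A}^{\precsim}$, so $U$ is an up-set with respect to $\precsim$, and show $U\in\tau_{A}^{\sqsubseteq}$, i.e., that $U$ is an up-set with respect to $\sqsubseteq$. To this end, pick $x\in U$ and $y\in Y$ with $x\sqsubseteq y$. The hypothesis $\sqsubseteq\,\subseteq\,\precsim$ yields $x\precsim y$, and since $U$ is $\precsim$-upward closed, $y\in U$. This covers all up-sets, so $\tau_{A}^{\precsim}\subseteq \tau_{A}^{\sqsubseteq}$.

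For $(2)\Rightarrow (1)$, the natural move is to test the hypothesis on a single well-chosen open set, namely the principal $\precsim$-up-set $\uparrow_{\precsim}\{x\}=\{z\in Y : x\precsim z\}$. This is evidently an up-set with respect to $\precsim$, hence a member of $\tau_{A}^{\precsim}$, so by hypothesis it belongs to $\tau_{A}^{\sqsubseteq}$, meaning it is also upward closed with respect to $\sqsubseteq$. Now assume $x\sqsubseteq y$; reflexivity of $\precsim$ gives $x\in \uparrow_{\precsim}\{x\}$, and $\sqsubseteq$-upward closedness then forces $y\in \uparrow_{\precsim}\{x\}$, i.e., $x\precsim y$. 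Thus $\sqsubseteq\,\subseteq\,\precsim$, closing the equivalence.

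I do not expect a real obstacle here; the only point requiring a moment of care is getting the direction of the inclusion right, since refining a preorder corresponds to shrinking, not enlarging, its Alexandroff topology (more comparable pairs force more up-sets to collapse into full up-sets with respect to the finer preorder). Using principal up-sets in the reverse direction makes this directionality automatic and keeps the argument symmetric and brief.
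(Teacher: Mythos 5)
Your proof is correct: both implications go through exactly as written, and the one point that needs care --- that the principal up-set $\uparrow_{\precsim}\{x\}=\{z\in Y:x\precsim z\}$ is indeed $\precsim$-upward closed (by transitivity) and contains $x$ (by reflexivity), so that testing $\sqsubseteq$-upward closedness on it recovers $x\precsim y$ from $x\sqsubseteq y$ --- is handled explicitly. Note, however, that the paper does not actually prove this lemma in the text: it defers to \cite[Lemma 1]{BEstRP19}, and the closely parallel argument the authors sketch for the analogous statement about upper topologies (``We reason as in Lemma~\ref{Lgiltza}\dots'') is a proof by contradiction phrased in terms of convergence of nets and cofinal subnets. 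Your route is genuinely different and, for the Alexandroff case, strictly better suited: since the open sets of $\tau_A^{\precsim}$ are \emph{exactly} the up-sets (not merely generated by a subbase), the order-theoretic bookkeeping with up-sets is a complete and elementary argument, whereas the net formulation imports topological machinery that is only really needed when the topology is described by a subbase of complements of lower contour sets, as for $\tau_U^{\precsim}$. The one cosmetic quibble is your closing heuristic: the phrase ``more up-sets to collapse into full up-sets'' is garbled, though the intended point --- that enlarging the preorder strengthens the condition of being an up-set and hence shrinks the Alexandroff topology --- is the correct directionality and is what your two implications establish.
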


It must be pointed out that the upper topology does not fulfill the preceding property such as it has been shown in \cite[Example 4]{BEstRP19}. This fact highlights Corollary \ref{Alex2} against Theorem \ref{Cont2} as the continuity of extensions is under consideration such as happens when ethical social welfare orders which, as Proposition \ref{Pminpreorder} reveals, are extensions of the grading principle.
\medskip

From Corollary \ref{Alex2} and Lemma \ref{Lgiltza} we obtain the promised method that gives the continuity of any ethical social welfare order on $l_{\infty}$.

\begin{proposition}\label{ACont}
Let $\tau$ be a topology on $l_{\infty}$.  If the Alexandroff topology $\tau_A^m$ associated to $\precsim_m$ is contained in $\tau$, then any ethical social welfare order  satisfying anonymity and strong monotonicity is lower $\tau$-continuous.
\end{proposition}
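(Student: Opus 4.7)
The plan is to assemble the statement from three already-established building blocks: Proposition \ref{Pminpreorder} (which identifies $\precsim_m$ as the minimal such preorder), Lemma \ref{Lgiltza} (which reverses inclusions when passing from preorders to their Alexandroff topologies), and Corollary \ref{Alex2} (which turns a fineness condition on the Alexandroff topology into lower continuity). Nothing new needs to be computed; the task is to line these results up in the correct order.

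First I would fix an arbitrary ethical social welfare order $\precsim$ on $l_{\infty}$ satisfying anonymity and strong monotonicity, and let $\tau_A^{\precsim}$ denote its Alexandroff topology. By Proposition \ref{Pminpreorder}, the grading principle $\precsim_m$ is the smallest such preorder, so $\precsim_m \subseteq \precsim$; in other words, $\precsim$ refines $\precsim_m$. Applying Lemma \ref{Lgiltza} to the pair $(\precsim_m,\precsim)$ then yields the reverse inclusion at the level of Alexandroff topologies, namely $\tau_A^{\precsim} \subseteq \tau_A^{m}$. Combining this with the hypothesis $\tau_A^{m} \subseteq \tau$ produces the chain
\[
\tau_A^{\precsim} \ \subseteq\ \tau_A^{m} \ \subseteq\ \tau,
\]
so $\tau$ is finer than $\tau_A^{\precsim}$. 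Corollary \ref{Alex2} then immediately delivers the lower $\tau$-continuity of $\precsim$, which is what we wanted.

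There is no genuine obstacle: the only subtlety is orientation, namely remembering that the Alexandroff-topology assignment $\precsim \mapsto \tau_A^{\precsim}$ is \emph{inclusion-reversing}, so that the minimality of $\precsim_m$ among the admissible preorders translates into the maximality of $\tau_A^{m}$ among the associated Alexandroff topologies. This is precisely what makes dominating $\tau_A^{m}$ by $\tau$ a single sufficient condition that works simultaneously for every ethical social welfare order of the required type. The argument is short enough that no lemmas or calculations beyond those cited above are needed.
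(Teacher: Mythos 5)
Your argument is correct and coincides exactly with the paper's own proof: both deduce from Proposition \ref{Pminpreorder} that any such order refines $\precsim_m$, apply Lemma \ref{Lgiltza} to get $\tau_A^{\precsim}\subseteq\tau_A^{m}\subseteq\tau$, and conclude via Corollary \ref{Alex2}. Your remark on the inclusion-reversing nature of $\precsim\mapsto\tau_A^{\precsim}$ is a helpful clarification of why the single hypothesis on $\tau_A^{m}$ suffices uniformly.
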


\begin{proof}By Proposition \ref{Pminpreorder} we have that any ethical social welfare order  satisfying anonymity and strong monotonicity refines the grading principle $\precsim_m$. Lemma \ref{Lgiltza} gives that $\tau_A^\precsim\subseteq \tau_A^{\precsim_{m}}\subseteq \tau$. Corollary \ref{Alex2}  provides the lower $\tau$-continuity. \end{proof}

\medskip

In the view of Proposition \ref{ACont}, it is worthy to mention that, although any ethical social welfare order $\precsim$ is lower $\tau$-continuous when $\tau_A^{\precsim{_m}}\subseteq\tau$, in general, there does not exist a lower semicontinuous utility function that represents it. Remember that, according to  \cite{bridges}, for the existence of this utility function, the ethical social welfare order must be perfectly separable.  However, an extension of a preorder that satisfies anonymity and strong monotonicity fails to be separable (in the Debreu sense) in general. Anyway, as exposed in Subsection \ref{NewCharact}, every ethical social welfare order admits a lower semi-continuous multi-utility representation provided that $\tau$ is finer than  $\tau_A^{\precsim{_m}}$ and, thus, finer than $\tau^{\precsim}_{U}$.
\medskip

Since Proposition \ref{Simplex} allows us to discard the topologies induced by the metrics $d_c$, $d_s$ and $d_p$ as an appropriate topology for making lower continuous an ethical social welfare order that fulfills anonymity and strong monotonicity, it seems natural to wonder whether our exposed theory is in accordance with the aforementioned result and, thus, we can infer the same conclusion in our new framework. The next result gives a positive answer to the posed question.

 \begin{proposition}\label{Lsupnorm}Let $\precsim_{m}$ be the grading principle on $l_{\infty}$. The upper topology $\tau_{U}^{\precsim_{m}}$ is not coarser than the topology $\tau_{d_{p}}$. Therefore the  Alexandroff topology $\tau_A^{\precsim{_m}}$ is also not coarser than $\tau_{d_{p}}$.
\end{proposition}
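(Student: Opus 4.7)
The plan is a proof by contradiction that reduces to the Banerjee--Mitra simplex obstruction (Proposition \ref{Simplex}) via the characterization in Theorem \ref{Cont2}. Suppose $\tau_U^{\precsim_m}\subseteq\tau_{d_p}$ on $l_\infty$; then Theorem \ref{Cont2} yields that $\precsim_m$ is lower $\tau_{d_p}$-continuous. I would then restrict the setting to $l_\infty^{[0,1]}$. For each $y\in l_\infty^{[0,1]}$ the lower contour set of the restricted preorder is just $L^{\precsim_m}(y)\cap l_\infty^{[0,1]}$, which is closed in the trace of $\tau_{d_p}$ on $l_\infty^{[0,1]}$ (and this trace coincides with $\tau_{d_p}$ computed directly on that subset). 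Proposition \ref{Pminpreorder} guarantees that $\precsim_m|_{l_\infty^{[0,1]}}$ remains an ethical social welfare preorder satisfying anonymity and strong monotonicity, so Proposition \ref{Simplex} applies and forces the simplex condition $d_p(\mathbf{0},S)>0$ with $S=\{x\in l_\infty^{[0,1]}:\sum_t x_t=1\}$.

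The contradiction is supplied by the customary sequence that defeats the simplex condition for $d_p$. Set $z_n=(1/n,1/n,\ldots,1/n,0,0,\ldots)\in l_\infty^{[0,1]}$ with $n$ initial entries equal to $1/n$. Then $z_n\in S$, and a direct computation gives $d_p(\mathbf{0},z_n)=\min\{1,n^{(1-p)/p}\}\to 0$ because $p>1$. Hence $d_p(\mathbf{0},S)=0$, which is incompatible with the conclusion just drawn from Proposition \ref{Simplex}. This proves $\tau_U^{\precsim_m}\not\subseteq\tau_{d_p}$. The second assertion is then immediate, since $\tau_U^{\precsim_m}\subseteq\tau_A^{\precsim_m}$ always, so any inclusion $\tau_A^{\precsim_m}\subseteq\tau_{d_p}$ would yield $\tau_U^{\precsim_m}\subseteq\tau_{d_p}$, which has just been ruled out.

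The step I expect to require the most care is the passage to the subspace $l_\infty^{[0,1]}$: one must verify that lower $\tau_{d_p}$-continuity is cleanly inherited by the restriction and that Proposition \ref{Pminpreorder}, stated on $l_\infty$, still gives anonymity and strong monotonicity for $\precsim_m|_{l_\infty^{[0,1]}}$, so that the hypotheses of Proposition \ref{Simplex} are truly met on the smaller space. The remaining ingredients are routine: the norm estimate $d_p(\mathbf{0},z_n)=n^{(1-p)/p}$, the application of Theorem \ref{Cont2}, and the standing fact that the Alexandroff topology is always finer than the upper topology.
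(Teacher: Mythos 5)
Your proof is correct, but it takes a genuinely different route from the paper's. The paper argues directly and self-containedly: it exhibits an explicit sequence $(y_n)$ lying in a single lower contour set $L^{\precsim_m}(x)$ (elements obtained from $x$ by finite permutations and coordinatewise decreases) that converges in $\tau_{d_p}$, with $d_p(l,y_n)=n^{1/p}/n\to 0$, to a limit $l$ satisfying $x\prec_m l$; hence the subbasic $\tau_U^{\precsim_m}$-open set $X\setminus L^{\precsim_m}(x)$ contains $l$ but no tail of the sequence, so it cannot be $\tau_{d_p}$-open. You instead argue by contradiction, pushing the hypothesis $\tau_U^{\precsim_m}\subseteq\tau_{d_p}$ through Theorem \ref{Cont2}, restricting to $l_\infty^{[0,1]}$ (the inheritance of lower continuity and of anonymity/strong monotonicity under restriction is indeed routine, as you note), and then invoking the Banerjee--Mitra necessary condition (Proposition \ref{Simplex}) together with the standard computation $d_p(\mathbf{0},z_n)=n^{(1-p)/p}\to 0$ on the simplex. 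Both arguments are sound; the trade-off is that yours is shorter but outsources the essential topological content to Proposition \ref{Simplex}, which the paper imports from \cite{mitra} without proof, whereas the paper's construction is meant precisely to show that its upper-topology framework detects the obstruction \emph{independently} of the simplex criterion (the proposition is introduced to confirm that ``our exposed theory is in accordance with'' Banerjee--Mitra, so deriving it \emph{from} their result would weaken that point). A further small benefit of the paper's route is that it identifies a concrete lower contour set whose closure fails, which is reused elsewhere (the same sequence appears in Example \ref{SvenEx1}), and it does not depend on $d_p$ belonging to the class $\triangle$ or on the preorder-versus-order form in which Proposition \ref{Simplex} is quoted.
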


\begin{proof}  Following \cite{svensson}, $\precsim_m$ can be extended in such a way that the extension is a total preorder and fulfills anonymity and strong monotonicity. Set $\preceq$ be such an extension. Thus, $\preceq$ is a ethical social welfare order on $l_{\infty}$. Consider the sequence $(y_n)_{n\in\mathbb{N}}$ in $l_{\infty}$ introduced in Example \ref{SvenEx1} and given as follows:

\begin{center}
 $ y_1=x=(0,1,0,\frac{1}{2}, \frac{2}{2}, 0, \frac{1}{3}, \frac{2}{3}, \frac{3}{3}, 0,...)$,\\
 $ y_2=(\frac{2}{2},1,0,0, \frac{1}{2}, 0, \frac{1}{3}, \frac{2}{3}, \frac{3}{3}, 0,...)$,\\
 $ y_3=(\frac{3}{3},1,0,\frac{1}{2}, \frac{2}{2}, 0, 0, \frac{1}{3}, \frac{2}{3}, 0,...)$,\\
 $...$\\
 $y_n=(\frac{n}{n},1,0, \frac{1}{2}, \frac{2}{2}, 0, \frac{1}{3}, \frac{2}{3}, \frac{3}{3}, 0,...,0,0, \frac{1}{n},..., \frac{n-1}{n},0,...)$
 \end{center}

Set $x=(0,1,0,\frac{1}{2}, \frac{2}{2}, 0, \frac{1}{3}, \frac{2}{3}, \frac{3}{3}, 0,...)$ and $l=(1,1,0,\frac{1}{2}, \frac{2}{2}, 0, \frac{1}{3}, \frac{2}{3}, \frac{3}{3}, 0,...)$. 

It is clear that $y_n\in L^{\precsim_{m}}(x)$ for all $n\in \mathbb{N}$. Thus $y_n\in L^{\preceq}(x)$ for all $n\in \mathbb{N}$. Moreover, the sequence $(y_n)_{n\in \mathbb{N}}$ converges to $l=(1,1,0,\frac{1}{2}, \frac{2}{2}, 0, \frac{1}{3}, \frac{2}{3}, \frac{3}{3}, 0,...)$ with respect to $\tau_{d_{p}}$, since
 $d_p(l,y_n)=\frac{n^{\frac{1}{p}}}{n}$ for all $n\in \mathbb{N}$. 

Nevertheless, the sequence  $(y_n)_{n\in\mathbb{N}}$ fails to converge with respect to $\tau_{U}^{\preceq}$ to $l$, since $y_n\not\in X\setminus L^{\preceq}(x)$ for all $n\in \mathbb{N}$ but $l\in X\setminus L^{\preceq}(x)$ because $x\prec_m l$ and, thus, $x\prec l$. Consequently, 
$\tau_{U}^{\preceq}$ is not coarser than the topology $\tau_{d_{p}}$.  Since $\tau_{U}^{\preceq}\subseteq \tau_A^{\preceq}$ we have that $\tau_A^{\preceq}$ is also not coarser than $\tau_{d_{p}}$ as claimed.  

\end{proof}

\medskip

Proposition \ref{Lsupnorm} explains the reason for which the impossibility results Theorems \ref{Diamond} and \ref{Diamond2} hold.
\medskip

Regarding the possibility of obtaining ethical social welfare orders on $l_{\infty}$ that fulfills anonymity and strong monotonicity and, in addition, they are lower $\tau_{d_{q}}$-continuous or  $\tau_{d_{1}}$-continuous, we have the following. On the one hand, the overtaking type criterion introduced in the proof of Theorem \ref{DiamondNew} is an example of ethical social welfare order on $l_{\infty}$ that satisfies the aforementioned requirements and it is, in addition, lower $\tau_{d_1}$-continuous. It must be stressed that the same fact on $l_{\infty}^{[0,1]}$ was proved in \cite{mitra}. Besides it was shown that $\tau_{d_1}$ is the smallest topology, among the induced by the metrics in $\triangle$, for which exists an ethical social welfare ordering satisfying anonymity, strong monotonicity and being lower continuous. On the other hand, we present an example of an ethical social welfare order on $l_{\infty}$ that satisfies anonymity and strong monotonicity but it is not lower $\tau_{d_{q}}$-continuous. So in the general $l_{\infty}$ framework, $\tau^{\precsim}_U$ is the smallest topology for which there exists an ethical social welfare ordering $\precsim$ satisfying anonymity and strong monotonicity that is lower continuous. Nonetheless, if we restrict ourselves to topologies induced by the metrics in $\triangle$, then again $\tau_{d_{1}}$ is the smallest topology that achieves this end. 

We end the subsection 
recovering Example~\ref{Eneg}, but now we modify it in order to construct an example of ethical social welfare order on $l_{\infty}$ that fails to be lower $\tau_{d_{q}}$-continuous.
%with an example of ethical social welfare order on $l_{\infty}$ that fails to be lower $\tau_{d_{q}}$-continuous. 
To this end, let us introduce a new axiom that we have called {\em negativity}. 
%In the following,
 We shall say that a preorder on a nonempty set $l_{\infty}$ satisfies the negativity if, given ($x,y\in l_{\infty}$), then $x\prec y$ provided that  $\sigma(x)>\sigma(y)$, where $\sigma(x)$ denotes the number of negative coordenates of $x$.% Of course, as in the case of anonymity, when dealing with this axiom we assume that the number of these  negative coordenates is always finite.
 Then, the preorder $\precsim^+$  on $l_{\infty}$ defined by 
$ x\precsim^+ y $  if and only if $x\precsim_m y$   or  
$ \sigma(x)> \sigma(y),$
%*********
%\begin{example}\label{Eneg2} \rm
%Let $\precsim^+$ be the preorder on $l_{\infty}$ defined by 
%$$x\precsim^+ y \iff \left\{  \begin{array}{c}
%x\precsim_m y, \\
%\text{  or  }\\
% \sigma(x)> \sigma(y),
%\end{array}\right.$$
%
%Clearly $\precsim^+$ refines the preorder $\precsim_m$. In fact, the preorder $\precsim^+$ 
is actually the smallest preorder satisfying anonymity, strong monotonocity and negativity.  Again, by \cite{svensson}, $\precsim^+$ can be extended in such a way that the extension  $\preceq$  fulfills anonymity and strong monotonicity. %Set $\preceq$ the ethical social welfare order on $l_{\infty}$ that extends $\precsim^+$. 
However, similar as it was done in Example~\ref{Eneg}, it can be proved that the preorder $\preceq$ is not lower $\tau_{d_q}$-continuous.

%Now, we define the sequence $(x_n)_{n\in \mathbb{N}}$ in $l_{\infty}$ by 
%$$x_n=(-\frac{1}{2^n},-\frac{1}{2^n},0,0,\ldots, 0,\ldots)$$ for each $n\in \mathbb{N}$.  It is clear that  $(x_n)_{n\in \mathbb{N}}$ converges to $\mathbf{0}$ with respect to the topology $\tau_{d_q}$, since {\color{red}$d_q(l,y_n)=2\cdot \left (\frac{1}{2^n}\right )^q$} for all $n\in \mathbb{N}$. 

%Set $y=(-1,0,...,0,...)$. By negativity we have that $x_n\prec^+ y$ and, thus, $x_n\prec y$ . Moreover, $y\prec^+ \mathbf{0}$ and, hence, $y\prec\mathbf{0}$. Thus, $\mathbf{0}\in X\setminus L_{\preceq}(y)$ whereas $x_n\notin X\setminus L_{\preceq}(y)$. Therefore $(x_n)_{n\in \mathbb{N}}$ fails to converge with respect to $\tau_U^{\preceq}$. It follows that $\tau_{d_1}$ is not finer than $\tau_U^{\preceq}$. Whence we conclude that the preorder $\preceq$ is not lower $\tau_{d_q}$-continuous.
%\end{example}

%From the example above
Hence, we infer that the upper topology is not, in general, coarser than $\tau_{d_q}$. Therefore, it is not possible in general to guarantee the continuity of an ethical social welfare (pre)order on $l_{\infty}$ neither with respecto to $\tau_{d_q}$ nor with respecto to $\tau_{d_{1}}$.
\medskip

Finally, we remark that the negativity axiom could be interpreted from an economical viewpoint as follows: the negative values can be understood as extreme and generalized (that affects to all the generation) cases of war, famine, natural disasters, etc. In case of anonimy data, negative values could suggest losses, debts, bankruptcies, etc.

\section{Quasi-pseudo-metrics: a quantitative tool for reconciling order, topology and preferences}\label{QSMRe}
The manifest difficulty to  reconciling the order and topology when this last is induced by a metric motivates us to leave such structures. The fact, on the one hand, that Theorems \ref{Cont1} and \ref{Cont2} shows that in order to make a preorder continuous is necessary to take into account the upper topology generated by such a preorder and, on the other hand, that the upper topology is not metrizable (notice that it is not Hausdorff) suggests us that the appropriate quantitative tool for reconciling topology and order is exactly provided by quasi-pseudo-metrics, which are able to encode the preorder. 

Following \cite{Ku2} (see also \cite{Goubault-Larrecq13}), a quasi-pseudo-metric on a nonempty set $Y$ is a function $d:Y\times Y\rightarrow \mathbb{R}^{+}$ such that for
all $x,y,z\in Y: $
\medskip

$
\begin{array}{ll}
\text{\textrm{(i)}} & d(x,x)=0, \\
\text{\textrm{(ii)}} & d(x,z)\leq d(x,y)+d(y,z).
\end{array}
 $
\medskip

Each quasi-pseudo-metric $d$ on a set $X$ induces a topology $\tau_d $ on $Y$ which has as a base the family of open balls $\{B_{d}(x,\varepsilon): x\in X \text{ and } \varepsilon>0\}$, where $B_{d}(x,\varepsilon)=\{y\in X:d(x,y)<\varepsilon\}$ for all $x\in X$ and $\varepsilon>0$. 

A quasi-pseudo-metric space is a pair $(Y,d)$ such that $Y$ is a nonempty set and $d$ is a quasi-metric on $Y$.

Notice that the topology $\tau_{d}$ is $T_0$ if and only if $d(x,y)=d(y,x)=0$ for all $x,y\in Y$.

Observe that a pseudo-metric $d$ on a nonempty set $Y$ is a quasi-pseudo-metric which enjoys additionally the following properties for all $x,y\in Y$:
\medskip

$
\begin{array}{ll}
\textrm{(iii)} & d(x,x)=0 \Rightarrow x=y,\\
\textrm{(iv)} & d(x,y)=d(y,x). 
\end{array} $
\medskip

A metric is a pseudo-metric $d$ on a nonempty set $Y$ which, in addition, fulfills for all $x,y\in Y$ the property below:
\medskip

$
\begin{array}{ll}
\textrm{(v)} & d(x,y)=0\Rightarrow x=y,\\
\end{array} $
\medskip

If $d$ is a quasi-pseudo-metric on a set $Y$, then the function $d^{s}$ defined on $Y\times Y$ by $d^{s}(x,y)=\max\{d(y,x),d(x,y)\}$ for all $x,y\in Y$ is a pseudo-metric on $Y$.

Every quasi-pseudo-metric space $d$ on $Y$ induces a preorder $\precsim_{d}$ which is defined on $Y$ as follows: $x\precsim_{d} y \Leftrightarrow d(x,y)=0$.

An illustrative example of quasi-pseudo-metric spaces is given by the pair $(\mathbb{R},d_L)$, where $d_L(x,y)=\max\{x-y,0\}$ for all $x,y\in \mathbb{R}$. Observe that $\tau_{d_{L}}$ is the upper topology $\tau_{U}^{\leq}$ on $\mathbb{R}$, where $\leq$ stands for the usual preorder on $\mathbb{R}$. Note that $d^{s}_L(x,y)=|y-x|$ for all $x,y\in \mathbb{R}$.

Following \cite{BonBreuRutt1996}, every preorder $\precsim$ can be encoded by means of a quasi-pseudo-metric. Indeed, if  $\precsim$ is a preorder on $X$, then the function $d_\precsim\colon X\times X \to \mathbb{R}^{+}$ given by

\begin{center}
$d_\precsim(x, y) = \left\{  \begin{array}{ll}
0, & x \precsim y\\ 
1, &  otherwise 
\end{array}\right.$\\
\end{center} is a quasi-pseudo-metric on $X$. 

Obviously, $x\precsim_{d_{\precsim}}y \Leftrightarrow  d_{\precsim}(x, y)=0 \Leftrightarrow x\precsim y$ and, in addition, we have that $\tau_{d_{\precsim}}=\tau_A^{\precsim}$ and that $\tau_U^{\precsim},\tau_L^{\precsim}\subseteq \tau_{d^s_{\precsim}}$. Therefore, Corollaries \ref{Alex1} and \ref{Alex2} give respectively the $\tau_{d^s_{\precsim}}$-continuity and the lower $\tau_{d_{\precsim}}$-continuity of $\precsim$. 

It must be stressed that (pseudo-)metrics are not able to encode any preorder except the equality order $\precsim_{=}$, that is, $x\precsim_{=}y \Leftrightarrow x=y$.

In view of the exposed facts, the use of quasi-pseudo-metrics makes possible to reconcile ``metric methods'' of measure and order. In the particular case of intergenerational distribution problem, this generalized metrics help us to provide both things, the numerical quantifications about the increase of welfare and the arrow of such an increase. Note that a metric would be able to give information on the increase but it, however, it will not give the aforementioned arrow.

The preceding method of ``metrization'' is able to guarantee, in contrast to Theorems \ref{Cont1} and \ref{Cont2}, possibility counterparts of the celebrate impossibility theorems due to Diamond, Svensson and Sakai introduced in Subsection \ref{Possi} in a appropriate metric approach. Specifically we obtain combining the preceding quasi-pseudo-metrization and Corollaries \ref{Alex1} and \ref{Alex2}, the next result which translates Theorems \ref{DiamondNew}, \ref{SakaiNew} and \ref{SakaiNew2} into the quantitative framework.

\begin{theorem}\label{possibilityQPM}There exists an ethical social welfare order $\precsim$ on $l_{\infty}$ which satisfies anonymity, strong monotonicity, strong distributive fairness semi convexity and $\tau_{d^s_{\precsim}}$-continuity and, thus, lower $\tau_{d_{\precsim}}$-continuity. 
\end{theorem}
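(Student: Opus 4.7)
The plan is to take the concrete ethical social welfare order already produced in the proofs of Theorems \ref{DiamondNew} and \ref{SakaiNew}, namely the extension $\precsim$ to $l_{\infty}$ of the overtaking criterion
\[
y \precsim x \iff \exists\, t_0\in\mathbb{N}\text{ such that }\sum_{i=1}^{t}\bigl(g(x_i)-g(y_i)\bigr)\geq 0\text{ for all }t\geq t_0,
\]
with $g\colon\mathbb{R}\to\mathbb{R}^{+}$ strictly concave and strictly isotonic, and then to feed it through the quasi-pseudo-metrization machinery $\precsim\,\longmapsto\, d_{\precsim}$ introduced just before the statement. Once the right $\precsim$ is in hand, the continuity claims are purely topological and follow from Corollaries \ref{Alex1} and \ref{Alex2}.

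First I would recall from Theorem \ref{DiamondNew} that this $\precsim$ already delivers anonymity and strong monotonicity on $l_{\infty}$. The one point that needs genuine verification (and is the main technical step) is that $\precsim$ satisfies the \emph{strong} distributive fairness semiconvexity axiom, i.e.\ that $s\,x+(1-s)\pi(x)\succ x,\pi(x)$ for every $s\in(0,1)$ and every $\pi\in\Pi_{\infty}$ with $x\neq\pi(x)$, and not merely for some $s$. The idea is to exploit strict concavity of $g$ coordinatewise: for $z=sx+(1-s)\pi(x)$ one has $g(z_i)\geq s\,g(x_i)+(1-s)\,g(\pi(x)_i)$, with strict inequality at the finitely many coordinates where $x_i\neq\pi(x)_i$. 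Summing and using that anonymity forces $\sum_{i=1}^{t}(g(\pi(x)_i)-g(x_i))=0$ for all sufficiently large $t$ yields $\sum_{i=1}^{t}(g(z_i)-g(x_i))>0$ eventually, hence $z\succ x$; the case $z\succ\pi(x)$ is symmetric. This replicates almost verbatim the argument in \cite[Theorem 1]{sakaiSC} but with the strict-concavity strict inequality tracked carefully, and it transfers to $l_{\infty}$ exactly as the distributive fairness semiconvexity argument does in the proof of Theorem \ref{SakaiNew}.

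Having fixed this $\precsim$, I would next carry out the purely topological step. Define $d_{\precsim}$ by the $\{0,1\}$-valued recipe recalled right before the theorem, so that $\tau_{d_{\precsim}}=\tau_{A}^{\precsim}$ and $d^{s}_{\precsim}$ is a pseudo-metric. The key observation is that every equivalence class $[x]=\{y:x\sim y\}$ is open in $\tau_{d^{s}_{\precsim}}$, and every up-set (resp.\ down-set) with respect to $\precsim$ is a union of such equivalence classes; hence
\[
\tau_{A}^{\precsim}\subseteq\tau_{d^{s}_{\precsim}}\quad\text{and}\quad \tau_{A}^{\precsim^{-1}}\subseteq\tau_{d^{s}_{\precsim}}.
\]
Corollary \ref{Alex1} then yields $\tau_{d^{s}_{\precsim}}$-continuity of $\precsim$, and Corollary \ref{Alex2}, applied to $\tau_{d_{\precsim}}=\tau_{A}^{\precsim}$, yields the lower $\tau_{d_{\precsim}}$-continuity. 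Since $\precsim$ was already shown to be anonymous, strongly monotone, and strongly distributively fair semiconvex, the conclusion follows.

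The only real obstacle is the verification of the \emph{strong} form of distributive fairness semiconvexity (rather than the plain form used earlier in the paper); the other axioms are inherited from Theorem \ref{DiamondNew}, and the passage from the preorder to its quasi-pseudo-metric encoding is a direct application of the two corollaries, so no further topological work is required.
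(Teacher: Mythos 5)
Your proposal is correct, and its topological core coincides with the paper's own (one-line) proof: the paper simply observes that $\tau_U^{\precsim},\tau_L^{\precsim}\subseteq\tau_{d^s_{\precsim}}$ and $\tau_U^{\precsim}\subseteq\tau_{d_{\precsim}}$ and invokes the earlier corollaries, leaving the existence of an order with the listed order-theoretic axioms entirely implicit. What you do differently is supply and verify the witness: you take the overtaking-type extension from Theorems \ref{DiamondNew} and \ref{SakaiNew} and check the \emph{strong} distributive fairness semiconvexity via strict concavity of $g$, which is genuinely needed since Theorem \ref{SakaiNew} only delivers the plain (existential-$s$) version; your coordinatewise estimate $g(sx_i+(1-s)\pi(x)_i)\geq s\,g(x_i)+(1-s)\,g(\pi(x)_i)$, strict at the finitely many displaced coordinates and valid for every $s\in(0,1)$, combined with the eventual cancellation of the partial sums (a consequence of $\pi$ being a finite permutation rather than of anonymity per se, a minor wording slip), correctly yields $x,\pi(x)\prec sx+(1-s)\pi(x)$ for all $s$, and this strict preference is preserved under the Svensson-type completion. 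Your observation that the $d^s_{\precsim}$-balls of radius $\tfrac12$ are exactly the equivalence classes, so that $\tau_A^{\precsim}\cup\tau_A^{\precsim^{-1}}\subseteq\tau_{d^s_{\precsim}}$, is a clean way to land in the hypotheses of Corollaries \ref{Alex1} and \ref{Alex2}. In short, your argument buys a self-contained proof where the paper's relies on an unstated existence claim; the cost is only the extra concavity computation.
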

\begin{proof}It is enough to observe that $\tau_U^{\precsim},\tau_L^{\precsim}\subseteq \tau_{d^s_{\precsim}}$  and $\tau_U^{\precsim}\subseteq \tau_{d_{\precsim}}$.
\end{proof}

\medskip

Returning to the discussion made in Subsection \ref{Possi} about the continuity of any extension of an ethical social welfare preorder satisfying anonymity and strong monotonicity on $l_{\infty}$ we have the following.

\begin{theorem}Let $\precsim_m$ be the smallest preorder on $X$ satisfying anonymity and strong monotonicity on $l_{\infty}$. Then any other ethical social welfare (pre)order satisfying anonymity and strong monotonicity is $\tau_{d^s_{\precsim_m}}$-continuous on $l_{\infty}$ and, thus, lower $\tau_{d_{\precsim_m}}$-continuous.
\end{theorem}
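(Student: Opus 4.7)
The plan is to chain together Proposition \ref{Pminpreorder} (the grading principle is the smallest preorder satisfying anonymity and strong monotonicity), Lemma \ref{Lgiltza} (refinement of preorders reverses inclusion of Alexandroff topologies), and the Alexandroff-based continuity Corollaries \ref{Alex1} and \ref{Alex2}, with the bridge being that both $\tau_A^{\precsim_m}$ and $\tau_A^{\precsim_m^{-1}}$ sit inside the symmetrized topology $\tau_{d^s_{\precsim_m}}$.

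First I would fix any ethical social welfare (pre)order $\precsim$ on $l_{\infty}$ satisfying anonymity and strong monotonicity. Proposition \ref{Pminpreorder} immediately gives $\precsim_m \subseteq \precsim$, and taking opposites yields $\precsim_m^{-1} \subseteq \precsim^{-1}$ as well. Applying Lemma \ref{Lgiltza} to both inclusions produces
\[
\tau_A^{\precsim} \subseteq \tau_A^{\precsim_m} \qquad \text{and} \qquad \tau_A^{\precsim^{-1}} \subseteq \tau_A^{\precsim_m^{-1}}.
\]

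Next I would identify these Alexandroff topologies inside $\tau_{d^s_{\precsim_m}}$. Recall from the preceding paragraphs that the $\{0,1\}$-valued quasi-pseudo-metric $d_{\precsim_m}$ satisfies $\tau_{d_{\precsim_m}} = \tau_A^{\precsim_m}$, and that the analogous equality for $\precsim_m^{-1}$ gives $\tau_{(d_{\precsim_m})^{-1}} = \tau_A^{\precsim_m^{-1}}$. Since $d_{\precsim_m}(x,y) \leq d^s_{\precsim_m}(x,y)$ and $d_{\precsim_m}(y,x) \leq d^s_{\precsim_m}(x,y)$ pointwise, the $d^s_{\precsim_m}$-balls are contained in both the $d_{\precsim_m}$-balls and the $(d_{\precsim_m})^{-1}$-balls around each point, whence $\tau_A^{\precsim_m}, \tau_A^{\precsim_m^{-1}} \subseteq \tau_{d^s_{\precsim_m}}$. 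Concatenating with the inclusions of the previous step yields $\tau_A^{\precsim}, \tau_A^{\precsim^{-1}} \subseteq \tau_{d^s_{\precsim_m}}$, and Corollary \ref{Alex1} then delivers the $\tau_{d^s_{\precsim_m}}$-continuity of $\precsim$. The lower $\tau_{d_{\precsim_m}}$-continuity is a cleaner subcase: the single inclusion $\tau_A^{\precsim} \subseteq \tau_A^{\precsim_m} = \tau_{d_{\precsim_m}}$, combined with Corollary \ref{Alex2}, yields the conclusion immediately.

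Conceptually this argument is very short because all the serious work has been done earlier. The only step that requires momentary care is the verification that $\tau_{d_{\precsim_m}} = \tau_A^{\precsim_m}$ remains correct when one passes to the opposite preorder (so that the symmetrization $d^s_{\precsim_m}$ really does dominate both orientations of the Alexandroff topology). This is really just the observation that for the discrete $\{0,1\}$-quasi-pseudo-metric encoding of a preorder, the opposite quasi-pseudo-metric encodes the opposite preorder, so no genuine obstacle arises; the proof is essentially a bookkeeping exercise once one recognizes that Lemma \ref{Lgiltza} applies simultaneously to $\precsim$ and to $\precsim^{-1}$.
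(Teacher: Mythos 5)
Your proof is correct and takes essentially the same route as the paper, whose own proof is the one-line citation of Corollary \ref{Alex1} and Lemma \ref{Lgiltza}; you have simply made explicit the intermediate bookkeeping (applying Proposition \ref{Pminpreorder} and Lemma \ref{Lgiltza} to both $\precsim$ and $\precsim^{-1}$, and checking that $\tau_{d^s_{\precsim_m}}$ dominates both Alexandroff topologies) that the paper leaves implicit.
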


\begin{proof}The desired result follows from Corollary \ref{Alex1} and Lemma \ref{Lgiltza}.

\end{proof}

\medskip

In the light of the above facts and the fact that a preorder is lower $\tau$-continuous with respect to a topology only in the case such topology refines the upper topology induced by the the preorder, it seems natural to restrict attention to the use of quasi-pseudo-metrics as a quantitative tool that allows us, at the same time, to get a numerical quantification of the improvement of welfare and of the closeness between intergenerational distributions.

\section{Order, topology and preferences: going back to metrics}\label{s5}
In Section \ref{QSMRe} we have shown that the use of quasi-pseudo-metrics reconciles ``metric methods'' of measuring and order requirements of ethical social welfare preorders. With the aim of  keeping close to the classical way of measuring in the literature, that is through metrics, a refinement of the method that encode the preorder in such a way that classical metrics are involved. The economical interpretations of their quantifications are also exposed.

 \medskip

In the remainder of this section we introduce a collection of techniques which generate quasi-pseudo-metrics from a given preorder and a metric on a nonempty set. The aforesaid quasi-pseudo-metrics generate either the Alexandroff topology induced by the preorder or a topology finer than it. So the below techniques provide the lower continuity of the preorder. 
\medskip

In order to state the mentioned techniques let us recall that, following \cite{Goubault-Larrecq13}, a quasi-metric on a nonempty set $Y$ is a quasi-pseudo-metric on $Y$ such that, for all $x,y\in Y$, the following property is hold:
\medskip

$\text{\textrm{(vi)} \hspace{0.2cm}} d(x,y)=d(y,x)=0\Leftrightarrow x=y.$ 
\medskip

A quasi-metric is called $T_1$ provided, for all $x,y\in Y$, that next property is true:
\medskip

$\text{\textrm{(vii)} \hspace{0.2cm}} d(x,y)=0\Leftrightarrow x=y.$ 
\medskip

Notice that the topology $\tau_d$ is $T_0$ when the quasi-pseudo-metric is just a quasi-metric and, in addition, such a topology is $T_1$ when the quasi-metric is $T_1$.
\medskip

Taking this into account we have the next result. Before stating it, let us recall that a pseudo-metric space $(Y,d)$ is $1$-bounded whenever $d(x,y)\leq 1$ for all $x,y\in Y$.

\begin{theorem}\label{L1}
Let $(Y,d)$ be a $1$-bounded pseudo-metric space and let $\precsim$ be a preorder on $Y$.  Then, the function $d^1_\precsim\colon X\times X \to \mathbb{R}^{+}$ defined by   \\
\begin{center}

$d^1_\precsim(x, y) = \left\{  \begin{array}{lc}
 d(x,y), & x \precsim y\\
 \\ 1,  &  otherwise
\end{array}\right..\medskip$\\
\end{center} is a quasi-pseudo-metric such that $\tau_{d^{1}_{\precsim}}$ is finer than $\tau_{A}^\precsim$. Therefore, $\precsim$ is lower $\tau_{d_\precsim}$- continuous. If $d$ is a metric on $Y$, then $d^1_\precsim$ is a $T_1$ quasi-metric.
\end{theorem}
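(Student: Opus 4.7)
The plan is to verify the four assertions in order: $d^1_\precsim$ is a quasi-pseudo-metric, the topology it induces refines the Alexandroff topology, the lower continuity follows as a corollary, and the $T_1$ strengthening holds in the metric case. Reflexivity is immediate: by reflexivity of $\precsim$ we have $x\precsim x$, so $d^1_\precsim(x,x)=d(x,x)=0$ using that $d$ is a pseudo-metric.

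The only nontrivial axiom is the triangle inequality $d^1_\precsim(x,z)\leq d^1_\precsim(x,y)+d^1_\precsim(y,z)$, and I would handle it by a short case analysis. If both $x\precsim y$ and $y\precsim z$ hold, then transitivity gives $x\precsim z$ and the inequality reduces to the triangle inequality for $d$, namely $d(x,z)\leq d(x,y)+d(y,z)$. In every other case, at least one of $d^1_\precsim(x,y)$ or $d^1_\precsim(y,z)$ equals $1$, so the right-hand side is at least $1$; and since $d$ is $1$-bounded, $d^1_\precsim(x,z)\leq 1$ regardless of whether $x\precsim z$ or not. This is the only step where the $1$-boundedness hypothesis is actually used, and it is the one place where carelessness could break the argument.

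Next I would show $\tau_A^\precsim\subseteq\tau_{d^1_\precsim}$. Given an up-set $U\in\tau_A^\precsim$ and any $x\in U$, consider the open ball $B_{d^1_\precsim}(x,1)$. If $y\in B_{d^1_\precsim}(x,1)$ then $d^1_\precsim(x,y)<1$, which by definition of $d^1_\precsim$ forces $x\precsim y$; since $U$ is upward closed, $y\in U$. Hence $B_{d^1_\precsim}(x,1)\subseteq U$ and $U\in\tau_{d^1_\precsim}$. Applying Corollary \ref{Alex2} to this inclusion yields the lower $\tau_{d^1_\precsim}$-continuity of $\precsim$.

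Finally, suppose $d$ is a metric. The condition $d^1_\precsim(x,x)=d^1_\precsim(x,x)=0$ is already established, so I only need the $T_1$ implication $d^1_\precsim(x,y)=0\Rightarrow x=y$. If $d^1_\precsim(x,y)=0$ then necessarily $x\precsim y$ (otherwise the value would be $1$), so $d^1_\precsim(x,y)=d(x,y)=0$, and since $d$ is a metric, $x=y$. No step here presents a real obstacle; the proof is essentially a routine verification whose only subtlety lies in using transitivity of $\precsim$ together with $1$-boundedness of $d$ to cover the triangle inequality in all cases.
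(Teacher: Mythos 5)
Your proposal is correct and follows essentially the same route as the paper's own proof: the same case split for the triangle inequality (with transitivity handling $x\precsim y\precsim z$ and $1$-boundedness covering the rest), the same ball-inside-up-set argument for $\tau_A^\precsim\subseteq\tau_{d^1_\precsim}$, and the same appeal to Corollary \ref{Alex2}. If anything, you are slightly more explicit than the paper about where the $1$-boundedness hypothesis is used.
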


\begin{proof}
The function $d^1_\precsim$ is actually a quasi-pseudo-metric. To see that, notice that in the case $x\precsim y\precsim z$ the triangular inequality  $d^1_\precsim(x, z)\leq d^1_\precsim(x, y)+ d^1_\precsim(y,z)$ is satisfied due the fact that $d$ is a metric. In any other case, etiher $d^1_\precsim(x, y)=1$ or $d^1_\precsim(y,z)=1$ and, hence, triankle inequality is satisfied too. Since $d^1_\precsim(x, x)=0 \Leftrightarrow d(x,x)=0$ for any $x\in Y$, we conclude that it is actually a quasi-pseudo-metric.
\medskip

Of course if $d$ is a metric on $Y$, then $d^1_\precsim(x, y)=0 \Leftrightarrow d(x,y)=0$ for any $x,y\in Y$. It follows that $d^1_\precsim$ is actually a $T_1$ quasi-metric.
\medskip

Let's see now that $\tau_{d^1_\precsim}$ is finer than $\tau_{A}^\precsim$. To this end, let $O\in \tau_{A}^\precsim$ and $x\in O$. Then $O=\bigcup_{x\in O} U^{\precsim}(x)$. Fix $r<1$. Then $B_{d^1_\precsim}(x,r)\subseteq U^\precsim(x)\subseteq O$. Hence, we conclude that  $\tau_{A}^\precsim\subseteq \tau_{d^1_\precsim}$. By Corollary \ref{Alex2} we have the lower $\tau_{d^1_\precsim}$- continuity.
\end{proof}
\medskip

Regarding intergenerational distributions, the quasi-pseudo-metric $d^1_\precsim$ introduced in the previous result is able to quantify the increase of welfare (when $x\precsim y$), by means the use of a metric. Moreover, it differentiates this case from the rest of the cases assigning, the retrogress ($y\prec x$) and the incomparability ($x\bowtie y$), to all of them $1$ as a quantification. 
\medskip

A slight modification of the technique introduced in Theorem \ref{L1} gives the next one.

\begin{theorem}\label{L2}
Let $(Y,d)$ be a pseudo-metric space and let $\precsim$ be a preorder on $Y$.  Then, the function $d^2_\precsim\colon X\times X \to \mathbb{R}^{+}$ defined by   \\

\begin{center}
$d^2_\precsim(x, y) =  \left\{  \begin{array}{lc}
\frac{d(x,y)}{2}, & x \precsim y\\ 
  \frac{1}{2} + \frac{d(x,y)}{2},&   otherwise
\end{array}\right..$
\end{center} is a quasi-pseudo-metric such that $\tau_{d^2_\precsim}$ is finer than $\tau_{A}^\precsim$. Therefore, $\precsim$ is lower $\tau_{d^2_\precsim}$- continuous. If $d$ is a metric on $Y$, then $d^1_\precsim$ is a $T_1$ quasi-metric.
\end{theorem}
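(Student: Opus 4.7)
The plan is to mimic the structure of the proof of Theorem \ref{L1}, adapting for the fact that now $d$ need not be $1$-bounded, and that the penalty of $1/2$ (rather than a full $1$) is offset by halving $d$. The four things to check are reflexivity, the triangle inequality, $T_1$-separation when $d$ is a metric, and that $\tau_{A}^{\precsim}\subseteq \tau_{d^2_\precsim}$; the final statement about lower continuity will then follow immediately from Corollary \ref{Alex2}.

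Reflexivity is trivial since $x\precsim x$, which places $(x,x)$ in the first branch, giving $d^2_\precsim(x,x)=d(x,x)/2=0$. For the triangle inequality I would do a four-way case split on whether $x\precsim y$ and $y\precsim z$ hold, recording in each case whether $x\precsim z$ is forced. When both relations hold, transitivity gives $x\precsim z$ and the inequality reduces to $d(x,z)/2\leq d(x,y)/2+d(y,z)/2$, which is the triangle inequality for $d$ scaled by $1/2$. When exactly one of the two relations fails, the right-hand side contributes at least $1/2+d(x,y)/2+d(y,z)/2\geq 1/2+d(x,z)/2$, which dominates the left-hand side regardless of whether $x\precsim z$. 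When both relations fail, the right-hand side is at least $1+d(x,z)/2$, still above $1/2+d(x,z)/2$. So the triangle inequality is verified uniformly; this case-by-case verification is the only mildly tedious step and is where I expect the bulk of the work to lie, though no real obstacle arises.

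For the $T_1$-property when $d$ is a metric, note that $d^2_\precsim(x,y)=0$ forces $d^2_\precsim(x,y)<1/2$, which rules out the second branch; hence $x\precsim y$ and $d(x,y)/2=0$, so $d(x,y)=0$ and consequently $x=y$. Conversely $d^2_\precsim(x,x)=0$ was already observed. For the topological claim, fix $O\in \tau_A^\precsim$ and $x\in O$. Since $O$ is an up-set we have $U^\precsim(x)\subseteq O$. Taking any radius $r$ with $0<r\leq 1/2$, if $y\in B_{d^2_\precsim}(x,r)$ then $d^2_\precsim(x,y)<1/2$, which excludes the second branch and therefore forces $x\precsim y$, i.e., $y\in U^\precsim(x)\subseteq O$. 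Hence $B_{d^2_\precsim}(x,r)\subseteq O$ and $\tau_A^\precsim\subseteq \tau_{d^2_\precsim}$. Finally, invoking Corollary \ref{Alex2} with $\tau=\tau_{d^2_\precsim}$ gives the lower $\tau_{d^2_\precsim}$-continuity of $\precsim$, completing the proof.
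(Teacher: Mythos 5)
Your proof is correct and follows essentially the same route as the paper's: the same case analysis for the triangle inequality (reducing to the scaled triangle inequality for $d$ when $x\precsim y\precsim z$, and otherwise bounding the left-hand side by $\tfrac{1}{2}+\tfrac{d(x,z)}{2}$ against a right-hand side of at least $\tfrac{1}{2}+\tfrac{d(x,y)}{2}+\tfrac{d(y,z)}{2}$), together with the same ball-of-radius-at-most-$\tfrac{1}{2}$ argument for $\tau_{A}^{\precsim}\subseteq\tau_{d^2_\precsim}$ and the appeal to Corollary \ref{Alex2}. Your four-way split is just a finer-grained version of the paper's two-case split, so there is no substantive difference.
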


\begin{proof}Next we show that $d^2_\precsim$ is a quasi-metric. Indeed  the triangular inequality $d^2_\precsim(x, z)\leq d^2_\precsim(x, y)+ d^2_\precsim(y,z)$  holds whenever $x\precsim y\precsim z$, since $d$ is a metric. In any other case, etiher $d^2_\precsim(x, y)=\frac{1}{2}+\frac{d(x,y)}{2}$ or $d^2_\precsim(y,z)=\frac{1}{2}+\frac{d(y,z)}{2}$ and, hence, $\frac{1}{2}+\frac{d(x,y)}{2}+\frac{d(y,z)}{2}\leq d^2_\precsim(x, y)+ d^2_\precsim(y,z)$. Since $d(x,y)\leq d(x,y)+d(y,z)$ we deduce that $d^2_\precsim(x,z)\leq \frac{1}{2}+\frac{d(x,y)}{2}+\frac{d(y,z)}{2}$ and, hence, $d^2_\precsim(x, z)\leq d^2_\precsim(x, y)+ d^2_\precsim(y,z)$. 

The same arguments to those given in the proof of Theorem \ref{L1} apply in order to show that $d^2_\precsim(x,x)=0\Leftrightarrow d(x,x)=0$ for all $x\in Y$ and that  $d^2_\precsim(x,y)=0\Leftrightarrow x=y$ whenever $d$ is a metric on $Y$ and, in addition, that $\tau_{d^2_\precsim}$ is finer than $\tau_{A}^\precsim$. Therefore, $\precsim$ is lower $\tau_{d^2_\precsim}$- continuous.

\end{proof}

\medskip

In the same way that $d^1_\precsim$, when intergenerational distributions are under consideration, the quasi-pseudo-metric $d^2_\precsim$ is able to quantify, by means of a metric, the increase of welfare (when $x\precsim y$). Moreover, it differentiates this case from the rest of the cases, the retrogress ($y\prec x$) and the incomparability ($x\bowtie y$). However, this time it assigns a lower value for the former case.

\medskip

Notice that, among the possible metrics, those belonging to the Banerjee and Mitra class $\triangle$ can be considered in statement of Theorem \ref{L1} and \ref{L2}.
\medskip

It must be stressed that modifications of the preceding technique can be obtained proceeding as follows:

\begin{center}
$d^2_\precsim(x, y) =  \left\{  \begin{array}{lc}
\frac{k\cdot d(x,y)}{n}, & x \precsim y\\ 
\frac{k}{n} + \frac{(n-k)\cdot d(x,y)}{n}, &   otherwise 
\end{array}\right..\medskip$\\
\end{center} for some $n\in \mathbb{R}_+$ and $k\in [0,n]$. 

\medskip

The next result introduces a technique which is related to the methods exposed in \cite{Levin1984,Levin1991,Levin1997}.

\begin{theorem}\label{Lrefinado}
Let $\precsim$ be a preorder on $Y$. 
%OSCAR: ESTO DE LA EXTENSION SOBRA, CON LA WEAK UTILITY BASTA and let $\sqsubseteq$ be the complete preorder that extends it. 
If $u\colon (X, \leq)\to (0,1)$ is a weak-utility for $\precsim$, then the function $d^3_\precsim\colon X\times X \to \mathbb{R}^{+}$ defined by   \\

 \begin{center}
 
$d^3_\precsim(x, y) =  \left\{  \begin{array}{lc}
 0, & x \precsim y 
 \\
1+|u(x)-u(y)|, & y\prec x
 \\ 1, &  otherwise
\end{array}\right..\medskip$
 \end{center} is a quasi-pseudo-metric such that $\tau_{d^3_{\precsim}}=\tau_A^\precsim$. Therefore, $\precsim$ is lower $\tau_{d^3_\precsim}$- continuous.
\end{theorem}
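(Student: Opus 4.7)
The plan is to verify in turn that $d^3_\precsim$ is a quasi-pseudo-metric, that its induced topology coincides with $\tau_A^\precsim$, and then to invoke Corollary \ref{Alex2} to conclude lower $\tau_{d^3_\precsim}$-continuity. The axiom $d^3_\precsim(x,x)=0$ is immediate from $x\precsim x$, so the real work splits into the triangle inequality and the identification of the open balls.

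For the triangle inequality $d^3_\precsim(x,z)\leq d^3_\precsim(x,y)+d^3_\precsim(y,z)$, I would case-split on the $\precsim$-relation between $x$ and $z$. The case $x\precsim z$ is trivial. In the case $z\prec x$ one has $d^3_\precsim(x,z)=1+u(x)-u(z)$ by isotonicity of $u$, and I would sub-case on the relation between $x$ and $y$: if $x\precsim y$, transitivity forces $z\prec y$, and the inequality collapses to $u(x)\leq u(y)$, which is exactly isotonicity; if $y\prec x$, the inequality reduces to $u(y)-u(z)\leq d^3_\precsim(y,z)$, which I would verify by exhausting the three possibilities for the relation between $y$ and $z$, crucially using $u(X)\subseteq(0,1)$ to handle the incomparable subcase; if $x\bowtie y$, one first notes that $y\precsim z$ would together with $z\prec x$ produce $y\precsim x$, contradicting incomparability, so $d^3_\precsim(y,z)\geq 1$ and the inequality follows. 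The remaining case $x\bowtie z$ gives $d^3_\precsim(x,z)=1$, and I would check that in every admissible combination of relations at least one of $d^3_\precsim(x,y)$ and $d^3_\precsim(y,z)$ is $\geq 1$ (the case $x\precsim y$ with $y\precsim z$ being ruled out by transitivity).

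For the topological equality $\tau_{d^3_\precsim}=\tau_A^\precsim$, the pivotal observation is that $B_{d^3_\precsim}(x,\varepsilon)=U^\precsim(x)$ for every $\varepsilon\in(0,1]$, since the only points $y$ with $d^3_\precsim(x,y)<1$ are those with $x\precsim y$. Both inclusions are then essentially automatic: on the one hand, every $\tau_A^\precsim$-open set is a union of sets of the form $U^\precsim(x)=B_{d^3_\precsim}(x,1)$, so it is $\tau_{d^3_\precsim}$-open; on the other hand, any $\tau_{d^3_\precsim}$-open set $O$ and any $x\in O$ contain the ball $B_{d^3_\precsim}(x,1)=U^\precsim(x)$, which shows $O$ is upward closed, i.e., $\tau_A^\precsim$-open. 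Lower $\tau_{d^3_\precsim}$-continuity of $\precsim$ is then delivered at once by Corollary \ref{Alex2}.

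The principal obstacle is the triangle inequality: the bookkeeping of nine-odd subcases is tedious, and the estimate $|u(x)-u(y)|<1$ forced by $u(X)\subseteq(0,1)$ is exactly what keeps the ``mixed'' subcases (those involving an incomparability) from failing, so the strict two-sided boundedness of the range of $u$ is genuinely used and not a cosmetic hypothesis.
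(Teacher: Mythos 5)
Your proposal is correct and follows essentially the same route as the paper: the triangle inequality is established by the same case analysis on the relation between $x$ and $z$ (with the bound $|u(x)-u(z)|<1$ from $u(X)\subseteq(0,1)$ doing the work in the subcases involving incomparability or a reversal), and the topological identity is obtained by observing that the balls of radius at most $1$ are exactly the up-sets $U^{\precsim}(x)$. No gaps; your explicit remark that $B_{d^3_\precsim}(x,\varepsilon)=U^{\precsim}(x)$ for all $\varepsilon\in(0,1]$ is a slightly cleaner packaging of the two inclusions the paper proves separately.
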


%The proof is similar to the the proof of Lema~\ref{Lrefinado}.
\begin{proof} It is trivial that $d^3_\precsim(x,x)=0$, for any $x\in Y$.
Let's see that the triangular inequality is satisfied, i.e., that $$d^3_\precsim(x,z)\leq d^3_\precsim(x,y)+d^3_\precsim(y,z)$$ for any $x,y,z\in X$. For this propose, we set $d(x,y)=|u(x)-u(y)|$ for all $x,y\in Y$ and distinguish the following possible cases.
\medskip

\noindent Case 1. $x\precsim z$. Then the inequality is trivially satisfied.
\medskip

\noindent Case 2. $x\bowtie z$. Then $d^3_\precsim(x,z)=1$. Notice that the case $x\precsim y\precsim z$ is impossible. Then the following cases may hold:

\begin{enumerate}
\item[$(i)$] If  $x\bowtie y$ or  $y\bowtie z$, then the inequality is satisfied because we have either $d^3_\precsim(x,y)=1$ or $d^3_\precsim(y,z)=1$.

\item[$(ii)$] If $x\precsim y$, then we have that $\neg(y\precsim z)$. In fact, we have that $z\prec y$, otherwise we would be either in case $(i)$ or in the impossible case $x\precsim y\precsim z$. Therefore, we obtain that $1\leq 1+d(y,z)$ and, thus, the inequality is satisfied.
\item[$(iii)$] If $\neg(x\precsim y)$, then we have that $y\prec x$, otherwise we would be in case $(i)$ above. Hence, we have that either $y\precsim z$ or $z\prec y$. Observe that $y\bowtie z$ matches up with the case $(i)$. Thus if $y\precsim z$ then we obtain $d^3_\precsim(x,y)=1+d(x,y)$, $d^3_\precsim(y,z)=0$ and, therefore, the inequality holds becase $1\leq 1+d(x,y)$. Finally, if $z\prec y$ then we obtain $z\prec y \prec x$ which contradicts the hypothesis $x\bowtie z$.

\end{enumerate}
\medskip

\noindent Case 3. $z\prec x$. Then $d^3_\precsim(x, z) =1+ d(x,z)$ and the following cases may hold:

\begin{enumerate}
\item[$(i)$] If  $x\bowtie y$ as well as  $y\bowtie z$, then $d^3_\precsim(x, y)=d^3_\precsim(y, z)=1$ and, thus, the inequality is satisfied because $1+d(x,y)\leq 2$. 

\item[$(ii)$]  If  $x\bowtie y$ or  $y\bowtie z$, then we have the following cases:

\begin{enumerate}
\item[$(ii_1)$] If  $z\bowtie y$, then  $y\prec  x$. In this case the inequality is satisfied because $d^3_\precsim(x, y)=1+d(x,y)$, $d^3_\precsim(y,z)=1$ and, thus, $1+d(x,z)\leq 2+d(x,y)$. 

\item[$(ii_2)$]  If  $x\bowtie y$,  then  $z\prec  y$. In this case the inequality is satisfied too, since $d^3_\precsim(x, y)=1$, $d^3_\precsim(y,z)=1+d(y,z)$ and, thus, $1+d(x,z)\leq 2+d(y,z)$.

\end{enumerate}

\item[$(iii)$] If it holds neither  $x\bowtie y$ nor  $y\bowtie z$, then we have the following cases:
\begin{enumerate}
\item[$(iii_1)$] If  $z\prec y\prec x$, then $1+d(x,z)=d^3_\precsim(x,z)\leq 1+d(x,y)+d(y,z)\leq d^3_\precsim(x,y)+d^3_\precsim(y,z)$.

\item[$(iii_2)$]  If  $y\precsim z\prec x$, then $d^3_\precsim(x,z)=1+d(x,z)\leq 1+d(x,y)=d^3_\precsim(x,y)+d^3_\precsim(y,z)$ with $d^3_\precsim(y,z)=0$.

\item[$(iii_3)$]  If  $z\prec x\precsim y$, then $d^3_\precsim(x,z)=1+d(x,z)\leq 1+d(y,z)=d^3_\precsim(x,y)+d^3_\precsim(y,z)$ with $d^3_\precsim(x,y)=0$.
\end{enumerate}

\end{enumerate}

Therefore, taking into account all above studied cases, we conclude that $d^3_\precsim$ satisfies the triangular inequality and, hence, it is actually a quasi-pseudo-metric.
\medskip

Finally, it remains to prove that $\tau_{d^3_\precsim}\subseteq \tau_{A}^\precsim$. The fact that $\tau_{A}^\precsim\subseteq \tau_{d^3_\precsim}$ can be deuced following the same arguments applied to the proof of Theorem \ref{L1}. Next we show that $\tau_{d^3_\precsim}\subseteq \tau_{A}^\precsim$. Thus, consider $A\in \tau_{d^3_\precsim}$. Then, for each $x\in A$, there exists $0<\varepsilon<1$ such that $B_{d^3_\precsim}(x,\varepsilon)\subseteq A$. Clearly, $U^{\precsim}(x)\subseteq B_{d^3_\precsim}(x,\varepsilon)\subseteq A$. So  $A\in \tau_{A}^\precsim$. Whence we conclude that $\tau_{d^3_\precsim}\subseteq \tau_{A}^\precsim$.

\end{proof}

\medskip

Similar to $d^1_{\precsim}$ and $d^2_{\precsim}$, the quasi-pseudo-metric $d^3_{\precsim}$ quantifies, by means of a metric, the increase of welfare $x\precsim y$ when  intergenerational distributions are under consideration. Moreover, it differentiates this case from the rest of the cases, the retrogress ($y\prec x$) and the incomparability ($x\bowtie y$). But now it assigns a greater and constant value $1$ when we want to measure the distance between incomparable elements and even a bigger value in case of regression.
\medskip

The quasi-pseudo-metric $d^2_\precsim$ introduced in Theorem \ref{L2} can be modified in such a way that its quantifications can be understood in the spirit of the quasi-pseudo-metric $d^3_\precsim$ of Theorem \ref{Lrefinado} such as the next result shows.

\begin{theorem}\label{FAP5}
Let $\precsim$ be a preorder on $Y$.
%OSCAR: ESTO DE LA EXTENSION SOBRA, CON LA WEAK UTILITY BASTA  and let $\sqsubseteq$ be the complete preorder that extends it. 
If $u\colon (X, \leq)\to (0,1)$ is a weak-utility for $\precsim$,  then the function $d^4_\precsim\colon X\times X \to \mathbb{R}^{+}$ defined by   \\

 \begin{center}
$d^4_\precsim(x, y) =  \left\{  \begin{array}{lc}
\frac{u(y)-u(x)}{2},& x \precsim y,
 \\
\frac{1}{2}+  \frac{u(x)-u(y)}{2},& y\prec x,
 \\ \frac{1}{2},&  otherwise,
\end{array}\right. \medskip$

 \end{center}  is a quasi-pseudo-metric such that $\tau_{d^4_{\precsim}}$ is finer than $\tau_A^\precsim$. Therefore, $\precsim$ is lower $\tau_{d^4_\precsim}$- continuous.
\end{theorem}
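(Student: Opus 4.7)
The plan is to mirror, with the obvious modifications, the proof strategy used for Theorem \ref{Lrefinado}, while tracking where the weak-utility values replace the $0/1$ values. First I would verify that $d^4_\precsim$ is nonnegative and that $d^4_\precsim(x,x)=0$. Nonnegativity of the first clause uses isotonicity of $u$: if $x\precsim y$, then $u(x)\le u(y)$, so $(u(y)-u(x))/2\ge 0$. The ``$y\prec x$'' clause is manifestly $\ge 1/2$, and the incomparable clause equals $1/2$. Reflexivity of $\precsim$ places $(x,x)$ in the first branch, giving $d^4_\precsim(x,x)=0$.

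The bulk of the work is the triangle inequality $d^4_\precsim(x,z)\le d^4_\precsim(x,y)+d^4_\precsim(y,z)$. I would organize the argument by first fixing the relation between $x$ and $z$ (three options: $x\precsim z$, $z\prec x$, or $x\bowtie z$) and then by the relation of $y$ to each of $x$ and $z$ (nine combinations each), exactly as in the proof of Theorem \ref{Lrefinado}. Transitivity of $\precsim$ immediately kills several combinations (for instance, $x\precsim y\precsim z$ contradicts $z\prec x$ or $x\bowtie z$; $z\prec y\prec x$ contradicts $x\bowtie z$; and a chain $z\prec x\precsim y$ together with $y\bowtie z$ is inconsistent, ruling out the otherwise delicate sub-case that would have required $2u(x)\le u(y)+u(z)$). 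In the surviving sub-cases the inequality reduces either to a telescoping identity (e.g.\ $x\precsim y\precsim z$ when $x\precsim z$, where both sides equal $(u(z)-u(x))/2$), to the isotonicity inequality $u(x)\le u(y)$ or $u(y)\le u(z)$, or to the bound $|u(a)-u(b)|<1$ coming from $u\colon Y\to(0,1)$; in each case the $1/2$ constants on the right-hand side provide the needed slack. This step-by-step enumeration is precisely what was carried out for $d^3_\precsim$, and it is the part I expect to be the main obstacle: a clean presentation demands keeping track of the impossible configurations and showing that in each feasible configuration the additive penalties $1/2$ (for $y\prec x$ or incomparability) absorb the arithmetic discrepancy between the different branches.

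Once $d^4_\precsim$ is known to be a quasi-pseudo-metric, I would show $\tau_A^\precsim\subseteq\tau_{d^4_\precsim}$ by establishing the pointwise identity
\[
B_{d^4_\precsim}(x,1/2)=U^\precsim(x).
\]
The inclusion $U^\precsim(x)\subseteq B_{d^4_\precsim}(x,1/2)$ follows because $x\precsim y$ forces $d^4_\precsim(x,y)=(u(y)-u(x))/2<1/2$ as $u$ takes values in $(0,1)$; the reverse inclusion is immediate because both alternative branches of the definition assign a value $\ge 1/2$. Given any $O\in\tau_A^\precsim$ and $x\in O$, upward-closedness provides $U^\precsim(x)\subseteq O$, hence $B_{d^4_\precsim}(x,1/2)\subseteq O$, so $O\in\tau_{d^4_\precsim}$.

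Finally, the lower $\tau_{d^4_\precsim}$-continuity of $\precsim$ is a direct application of Corollary \ref{Alex2}, since $\tau_{d^4_\precsim}$ refines the Alexandroff topology $\tau_A^\precsim$. The main difficulty throughout is the combinatorial explosion in the triangle inequality verification; the key conceptual insight that keeps it manageable is that the $1/2$ offsets are precisely chosen so that every ``jump'' between branches of the definition is dominated by at most one $1/2$ on the right-hand side, and the residual metric-like quantity is controlled either by $u\in(0,1)$ or by isotonicity.
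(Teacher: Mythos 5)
Your proposal is correct and follows essentially the same route as the paper, whose proof of Theorem~\ref{FAP5} consists precisely of adapting the case analysis from Theorem~\ref{Lrefinado}; you correctly identify the one genuinely delicate configuration ($z\prec x\precsim y$ with $y\bowtie z$) as impossible by transitivity, and your identity $B_{d^4_\precsim}(x,1/2)=U^{\precsim}(x)$ cleanly yields $\tau_A^{\precsim}\subseteq\tau_{d^4_\precsim}$, after which Corollary~\ref{Alex2} applies. No gaps.
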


\begin{proof}
The proof is similar to the proof of Theorem \ref{Lrefinado}.
\end{proof}

\medskip

Finally we obtain the following interesting onsequence.

\begin{corollary}Any ethical social welfare preorder satisfying anonymity and strong monotonicity is lower $\tau_{d^i_{\precsim_m}}$-continuous with $i=1,2,3,4$.
\end{corollary}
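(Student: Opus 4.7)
The plan is to chain together three results already established in the paper: Proposition \ref{Pminpreorder}, Lemma \ref{Lgiltza}, and Corollary \ref{Alex2}, using the constructions of $d^i_{\precsim_m}$ as the bridge.

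First, I would invoke Proposition \ref{Pminpreorder}, which tells us that $\precsim_m$ is the smallest preorder on $l_\infty$ satisfying anonymity and strong monotonicity. Hence, any ethical social welfare preorder $\precsim$ fulfilling anonymity and strong monotonicity refines $\precsim_m$, i.e., $\precsim_m\subseteq\precsim$.

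Next, I would apply Lemma \ref{Lgiltza} to this inclusion to obtain the reverse containment of Alexandroff topologies, namely $\tau_A^{\precsim}\subseteq\tau_A^{\precsim_m}$. Then, for each $i=1,2,3,4$, Theorems \ref{L1}, \ref{L2}, \ref{Lrefinado} and \ref{FAP5} respectively provide that $\tau_A^{\precsim_m}\subseteq\tau_{d^i_{\precsim_m}}$ (in the case $i=3$ we even have equality). Concatenating these inclusions yields $\tau_A^{\precsim}\subseteq\tau_{d^i_{\precsim_m}}$.

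Finally, I would invoke Corollary \ref{Alex2}: since $\tau_{d^i_{\precsim_m}}$ is finer than $\tau_A^{\precsim}$, the preorder $\precsim$ is lower $\tau_{d^i_{\precsim_m}}$-continuous, as desired. There is no real obstacle in this argument; it is essentially bookkeeping that combines the machinery already built, with the only non-trivial ingredient being Lemma \ref{Lgiltza}'s reversal of direction when passing from the refinement of preorders to their Alexandroff topologies, which is precisely what makes the choice of $\precsim_m$ (as the smallest such preorder, and hence producing the largest Alexandroff topology) the natural candidate.
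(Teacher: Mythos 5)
Your proof is correct and follows essentially the same route as the paper's own argument: Proposition \ref{Pminpreorder} to get $\precsim_m\subseteq\precsim$, Lemma \ref{Lgiltza} to reverse the inclusion at the level of Alexandroff topologies, the containments $\tau_A^{\precsim_m}\subseteq\tau_{d^i_{\precsim_m}}$ from Theorems \ref{L1}--\ref{FAP5}, and Corollary \ref{Alex2} to conclude lower continuity. If anything, your write-up is slightly cleaner, since you cite Proposition \ref{Pminpreorder} where the paper's proof mistakenly refers to Proposition \ref{Lsupnorm}, and you make explicit the final appeal to Corollary \ref{Alex2} that the paper leaves implicit.
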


\begin{proof}By Theorems \ref{L1}, \ref{L2}, \ref{Lrefinado}, $d^i_{\precsim_{m}}$ is a quasi-metric whose topology $\tau_{d^{i}_{\precsim_{m}}}$ is finer than or equal to $\tau_{A}^{\precsim_{m}}$ for all $i=1,2,3,4$. 

By Proposition \ref{Lsupnorm} we have that every ethical social welfare preorder $\preceq$ satisfying anonymity and strong monotonicity is an extension of  $\precsim_{m}$. Thus, by  Lemma \ref{Lgiltza}, we obtain that $\tau_A^{\preceq}\subseteq \tau_A^{\precsim_{m}}\subseteq  \tau_{d^i_{\precsim_m}}$ for all $i=1,2,3,4$. This concludes the proof.

\end{proof}

\section{Conclusion}

Summarizing, in the present paper we have studied the compatibility between preorders and  topologies. Thus, we have provided a characterization of those that are continuous-complatible. Such a characterization states that the considered  topologies must be finer than the so-called upper topology induced by the preorder and, thus, it  clarifies which topology is the smallest one among those that make a preorder. Moreover, we have given sufficient conditions that allows us to discard in an easy way the continuity of a preorder. Of course, such a characterization is applied to provide an explanation about the reason for which it is not possible (in general) to merge a social intergenerational preference which satisfies Pareto efficiency and anonymity with the continuity axiom. Thus, possibility counterparts of the impossibility theorems due to Diamond, Svensson and Sakai are provided. Besides, we have shown that our methodology is in accordance with the classification due to Banerjee and Mitra. However, we have shown that our characterization presents two advantages with respect to the approach given by the aforesaid authors. On the one hand, the new result allows us to decide the continuity of the preference even if the topology under consideration is not metrizable. On the other hand, Banerjee and Mitra only provide a necessary condition. In this direction we have provided an example of social intergenerational preference that enjoy anonymity and strong Pareto requirements and, in addition, it fulfills the simplex condition, due to Banerjee and Mitra, but they are not continuous.

As a matter of the above exposed facts and the fact that the upper topology is not metrizable, we have suggested quasi-pseudo-metrics as an appropriate quantitative tool for reconciling topology and social intergenerational preferences. Concretely, we have shown that such generalized metric notion is able to encode the order relation that induce the intergenerational preference. Thus it provides numerical quantifications about the increase of welfare and the arrow of such an increase. Note that a metric would be able to yield information on the increase but it, however, will not give the aforementioned arrow.

Based on the fact that every preorder, and thus every social intergenerational preference, can be encoded by means of a quasi-pseudo-metric, we have developed a method to induce a quasi-pseudo-metric that makes always the preference continuous with respect its induced topology,  the Alexandroff topology generated by the preorder, which is finer than the upper topology. Such a method is able to guarantee possibility counterparts of the celebrate impossibility theorems due to Diamond, Svensson and Sakai and, in addition, it is able to give numerical quantifications of the improvement of welfare. Moreover, we have also shown that our method makes always the preferences semi-continuous multi-utility representables in the sense of \"{O}zg\"{u} Evern and Efe O. Ok.

Finally, a refinement of the previous method is also presented in such a way that metrics are involved. 

 \section*{Acknowledgements} 

Asier Estevan acknowledges financial support from the Ministry of Economy and Competitiveness of Spain under grants   MTM2015-63608-P (MINECO/FEDER) and ECO2015-65031.% as well as from the Basque Government under grant IT974-16.

Oscar Valero acknowledges financial support from  FEDER/Mi\-nisterio de Ciencia, Innovaci\'{o}n y Universi\-dades-Agencia Estatal de Investigaci\'{o}n/$_{-}$\-Pro\-yecto PGC2018-095709-B-C21 and from project PROCOE/4/2017 (Direcci\'{o} General d'Innovaci\'{o} i Recerca, Govern de les Illes Balears).

\bibliographystyle{IEEEtran}

\begin{thebibliography}{1}

 \bibitem{AsheimBuch01}G.B. Asheim, W. Buchholf, {\em Justifying Sustainabiklity}, Journal od Envieronmental Economics and Management \textbf{41} (2001), 252-268.
% \bibitem{arenas} F.G.  Arenas,   {Alexandroff spaces}, {\em Acta Math. Univ. Comenianae}, (1999).
 
\bibitem{mitra} K.  Banerjee,  T. Mitra, \emph{On the continuity of ethical social welfare orders on infinite utility streams}, Social Choice Welfare \textbf{30}, (2008) 1-12.
 
\bibitem{mitra2} K. Basu, T. Mitra,  \emph{Aggregating infinigte utility streams with intergenerational equity: the impossibility of being paretian}, Econometrica \textbf{71} (5) (2003), 1557-1563.
 
 \bibitem{BonBreuRutt1996}M.M. Bonsangue, F. van Breugel, J.J.M.M. Rutten, {\em Alexandroff and Scott topologies for generalized ultrametric spaces}, Annals of the New York Academy of Sciences  \textbf{806} (1) (1996), 49-68.

 \bibitem{BEstRP19}G. Bosi, A. Estevan, A. Ravent\'{o}s-Pujol,  {\em Topologies for semicontinuous Ritcher-Peleg multi-utilities}, Theory and Decision  \textbf{88} (2020) , 457-470.
 
\bibitem{bridges} D.S. Bridges and G.B. Mehta, {\em Representations of Preference Orderings},  Berlin: Springer-Verlag, 1995. 
 
 \bibitem{Campbell1985}D.E. Campbell, {\em Impossibility theorems and infinite horizon planning},  Social Choice and Welfare \textbf{2} (1985), 283-293.

\bibitem{Chichilnisky1977}G. Chichilnisky, {\em Economic development and efficiency criteria in the satisfaction of basic needs},  Applied Mathematical Modelling \textbf{1} (6)(1977), 290-297.

\bibitem{Chichilnisky1996}G. Chichilnisky, {\em An axiomatic approach to sustainable development}, Social Choice and Welfare \textbf{13} (1996), 231-257.

\bibitem{Diamond1965}P. Diamond, {\em The evaluation on infinite utility streams}, Econometrica, \textbf{33} (1965), 170-177.

\bibitem{Epstein1986}L.G. Epstein, {\em Intergenerational preference orderings},  Social Choice and Welfare \textbf{3} (1986), 151-160.
 
 \bibitem{EvO11} O.  Evren,   E.A. Ok, {\em On the multi-utility representation of preference relations}, Journal of Mathematical Economics {\bf 47} (2011),  554-563.


\bibitem{FaugerasRuschendorf18}O.P. Faugeras, L. R\"{u}schendorff, {\em Risk excess measures induced by hemi-metrics}, Probability, Uncertainty and Quantitative Risk \rm (2018), 3:6.


\bibitem{FM03}M. Fleurbaey, Ph. Michel, {\em Intertemporal equity and the extension of the Ramsey criterion}, Journal of Mathematical Economics \textbf{39} \rm (2003), 777-802.

\bibitem{Gierzetall2003}G. Gierz, K.H. Hofmann, K. Keimel, J.D. Lawson, M.W. Mislove, D.S. Scott, {\em Continuous Lattices and Domains}, Cambridge: Cambridge University Press,  2003. 

\bibitem{Goubault-Larrecq13}J. Goubault-Larrecq, {\em Non-Hausdorff Topology and Domain Theory}, New York: Cambridge University Press, 2013


\bibitem{Herd} G. Herden,  {\em On the existence of utility functions II}, Math. Soc. Sci. \textbf{18} (1989), 107-117.


\bibitem{Koopmans1960} T. Koopmans,  {\em Stationary ordinal utility and impatience}, Econometrica  \textbf{28} \rm (1960), 287-309.


\bibitem{Ku2} H.P.A. Künzi, {\em Nonsymmetric distances and their associated topologies: About the origins
of basic ideas in the area of asymmetric topology}, Handbook of the History of General
Topology, Aull, C.E. and Lowen R. (Eds.), vol.
3, (Kluwer, Dordrecht, 2001), 853-968.


\bibitem{Lawers1997} L. Lauwers {\em Continuity and equity with infinite horizons}, Social Choice Welfare  \textbf{14} (1997), 345-356.

\bibitem{Levin1984} V. L. Levin, \emph{Lipschitz pre-orders and Lipschitz utility functions}, Russian Mathematical Surveys \textbf{39} \rm (1984), 217-218.

\bibitem{Levin1991} V. L. Levin, \emph{Some applications of set-valued mappings in mathematical economics}, Journal of Mathematical Economics \textbf{20} \rm (1991), 69-87.

\bibitem{Levin1997} V. L. Levin, \emph{Reduced cost functions and their applications}, Journal of Mathematical Economics \textbf{28} \rm (1997), 155-186.


\bibitem{Levin2008} V. L. Levin, \emph{Smooth feasible solutions to a dual Monge-Kantorovich problem and their application to the best approximation and mathematical economics problems}, Doklady Mathematics \textbf{77}(2) \rm (2008), 281-283.

\bibitem{Levin2011} V. L. Levin, \emph{General preferences and utility functions . An approach based on the dual Kantarovich problem}, Doklady Mathematics \textbf{83}(2) \rm (2011), 236-237.


\bibitem{Levin} V. L. Levin, \emph{Functionally closed preorders and strong stochastic dominance}, Soviet Math. Doklady \textbf{32} \rm (1985), 22-26.


\bibitem{Levin2} V. L. Levin, \emph{The Monge-Kantorovich problems and stochastic preference relation}, Advances in
Mathematical Economics, \textbf{3} \rm (2001), 97-124.

\bibitem{Mas-ColellWhinstonGreen1995}A. Mas-Colell, M.D. Whinston, J.R. Green, {\em Microeconomic Theory}, Oxford: Oxford University Press, 1995.




\bibitem{Mash}  Mashburn, J.D.,  A note on reordering ordered topological spaces and the
existence of continuous, strictly increasing functions, {\em Topology Proceedings} {\bf 20} (1995), 207-250.

\bibitem{GMetha1998}G.B. Metha, {\em Preference and utility}, Handbook of Utility Theory, S. Barbet\`{a} et all. (Eds.), Vol. 1, Kluwer Academic Publishers, Dordrecht, 1998, pp. 1-50.

\bibitem{Ok07}E.O. Ok. {\em Real Analysis with Economics Applications}, Princeton: Princeton University Press, 2007.

\bibitem{Peleg1970}B. Peleg, {\em Utility functions for partially ordered topological spaces}, Econometrica \textbf{38} (1970), 93-96.


\bibitem{StoyanovRachevFabozzi11}S.T. Rachev, S.V. Soyanov, F.J. Fabozzi, {\em A Probability Metrics Approach to Financial Risk Measures}, Oxford: Wiley-Blackwell, 2011. 

\bibitem{Ritcher1966}M. Ritcher, {\em Revealed preference theory}, Econometrica \textbf{34} (1966), 635-645.

\bibitem{sakaiSC} T. Sakai, {\em An axiomatic approach to intergenerational equity}, Social Choice Welfare,  \textbf{20} (2003), 167-176

\bibitem{sakaiEB} T. Sakai,  (2003) {\em{Intergenerational preferences and sensitivity to the present}}, Economics Bulletin, \textbf{4} (26) (2003), 1-5. 

\bibitem{Sakai06} T. Sakai, {\em Equitable intergenerational preferences on restricted domains}, Social Choice Welfare  \textbf{27} (2006), 41-54.

\bibitem{Sidgwick1970}S. Sidgwick, {\em The Methods of Ethics}, Macmillan, London, 1999.

\bibitem{StoyanovRachevFabozzi12}S.V. Soyanov, S.T. Rachev, F.J. Fabozzi, {\em Metrization of stochastic dominance rules}, International Journal of Theoretical and Applied Fianance \textbf{15}(2), 1250017-1.

\bibitem{svensson} L.G.  Svensson,(1980) {\em{Equity Among Generations}}, Econometrica \textbf{48}, 1251-1256.

\bibitem{Varian1992}H.R. Varian, {\em Microeconomics Analysis}, London: W.W. Norton \& Company, 1992.
\end{thebibliography}

\end{document}